\let\doendproof\endproof
\renewcommand\endproof{~\hfill$\qed$\doendproof}
\newtheorem{theorem}{Theorem}%[section]
\newtheorem{proposition}[theorem]{Proposition}
\newtheorem{lemma}[theorem]{Lemma}
\newtheorem{corollary}[theorem]{Corollary}
\newtheorem{definition}[theorem]{Definition}
\newtheorem{observation}[theorem]{Observation}
\newcommand{\Kl}{{\rm Kernel }}
\newcommand{\perim}{\Pi}
\newcommand\blfootnote[1]{%
  \begingroup
  \renewcommand\thefootnote{}\footnote{#1}%
  \addtocounter{footnote}{-1}%
  \endgroup}
\title{Optimizing generalized kernels of polygons}
\date{}
\author[1]{Alejandra Martinez-Moraian\thanks{Email: \texttt{alejandra.martinezm@uah.es}}}
\author[2]{David Orden\thanks{Email: \texttt{david.orden@uah.es}}}
\author[3]{Leonidas Palios\thanks{Email: \texttt{palios@cs.uoi.gr}}}
\author[4]{Carlos Seara\thanks{Email: \texttt{carlos.seara@upc.edu}}}
\author[5]{Pawe{\l} \.{Z}yli\'{n}ski\thanks{Email: \texttt{zylinski@inf.ug.edu.pl}}}
\affil[1]{Departamento de F\'{i}sica y Matem\'{a}ticas, Universidad de Alcal\'{a}, Spain.}
\affil[2]{Departamento de F\'{i}sica y Matem\'{a}ticas, Universidad de Alcal\'{a}, Spain.}
\affil[3]{Dept.\ of Computer Science and Engineering, University of Ioannina, Greece.}
\affil[4]{Departament de Matem\`{a}tiques, Universitat Polit\`{e}cnica de Catalunya, Spain.}
\affil[5]{Institute of Informatics, University of Gda\'{n}sk, Poland.}
\begin{document}
  \maketitle
 \blfootnote{
 	\begin{minipage}[l]{0.3\textwidth}
 		\includegraphics[trim=10cm 6cm 10cm 5cm,clip,scale=0.12]{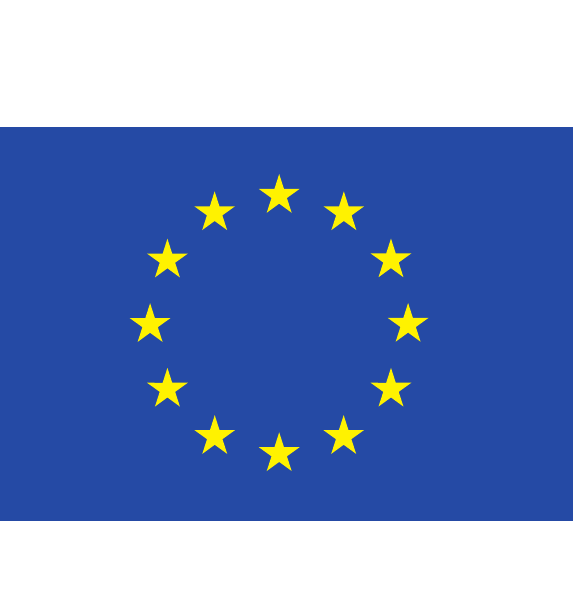}
 	\end{minipage}
 	\hspace{-3cm}
 	\begin{minipage}[l][1cm]{0.75\textwidth}
       	This work has received funding from the European Union's Horizon 2020
       	research and innovation programme under the Marie Sk\l{}odowska-Curie
       	grant agreement No 734922.
      \end{minipage}
 }
 \vspace{-0.9cm}

\begin{abstract}
Let $\mathcal{O}$ be a set of $k$ orientations in the plane, and let $P$ be a simple polygon in the plane. Given two points $p,q$ inside $P$, we say that $p$ $\mathcal{O}$-\emph{sees} $q$ if there is an $\mathcal{O}$-\emph{staircase} contained in $P$ that connects $p$ and~$q$. The \emph{$\mathcal{O}$-Kernel} of the polygon $P$, denoted by $\mathcal{O}$-$\Kl(P)$, is the subset of points of $P$ which $\mathcal{O}$-see all the other points in $P$. This work initiates the study of the computation and maintenance of $\mathcal{O}$-$\Kl(P)$ as we rotate the set $\mathcal{O}$ by an angle $\theta$, denoted by  $\mathcal{O}$-$\Kl_{\theta}(P)$. In particular, we consider the case when the set $\mathcal{O}$ is formed by either one or two orthogonal orientations, $\mathcal{O}=\{0^\circ\}$ or  $\mathcal{O}=\{0^\circ,90^\circ\}$. For these cases and $P$ being a simple polygon, we design efficient algorithms for computing the
$\mathcal{O}$-$\Kl_{\theta}(P)$ while $\theta$ varies in $[-\frac{\pi}{2},\frac{\pi}{2})$, obtaining: (i)~the intervals of angle~$\theta$ where $\mathcal{O}$-$\Kl_{\theta}(P)$ is not empty, (ii)~a value of angle~$\theta$ where $\mathcal{O}$-$\Kl_{\theta}(P)$ optimizes area or perimeter. Further, we show how the algorithms can be improved when $P$ is a simple orthogonal polygon. In addition, our results are extended to the case of a set $\mathcal{O}=\{\alpha_1,\dots,\alpha_k\}$.
\end{abstract}

\section{Introduction}

The problem of computing or reaching the kernel of a polygon is a well-known visibility problem in computational geometry~\cite{IK,LP,P2}, closely related to the problem of guarding a polygon~\cite{P,SRW,SW}, and also to robot navigation inside a polygon with the restriction that the robot path must be \emph{monotone} in some predefined set of orientations~\cite{HBZ19,SS97}. The present contribution goes a step further in the latter setting, allowing the polygon or, equivalently, the set of predefined orientations to rotate. Thus, we show how to compute the orientations that maximize the region from which every point can be reached following a monotone path.

A curve $\mathcal{C}$ is $0^\circ$-\emph{convex} if its intersection with any line parallel to the $x$-axis, called $0^\circ$-line, is connected (equivalently, if the curve $\mathcal{C}$ is $y$-monotone). Extending this definition, a curve $\mathcal{C}$ is $\alpha$-\emph{convex} if the intersection of $\mathcal{C}$ with any line forming a counterclockwise angle $\alpha$ with the positive $x$-axis, called $\alpha$-line, is connected (equivalently, if the curve $\mathcal{C}$ is monotone with respect to the direction $\alpha^\perp$).

Let us now consider a set $\mathcal{O}=\{\alpha_1,\dots,\alpha_k\}$ of $k$ orientations in the plane, each of them given by an oriented line~$\ell_i$, $1\le i\le k$, through the origin of the coordinate system and forming counterclockwise angle~$\alpha_i$ with the positive $x$-axis. Then, a curve is $\mathcal{O}$-\emph{convex} if it is $\alpha_i$-convex for all $i$, $1\le i\le k$, i.e., if the intersection of $\mathcal{C}$ with any line forming a counterclockwise angle $\alpha_i$, $1\le i\le k$, with the positive $x$-axis is connected
(equivalently, if it is
monotone with respect to all the directions~$\alpha_i^\perp$).
From now on, an $\mathcal{O}$-convex curve will be called an $\mathcal{O}$-\emph{staircase}. See Figure~\ref{FigureStaircases} for an illustration.

\begin{figure}[htb]
\centering
\includegraphics[width=0.5\textwidth]{./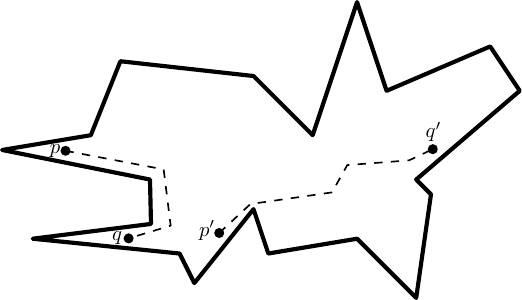}
\caption{A $\{0^\circ\}$-staircase which is not a $\{0^\circ,90^\circ\}$-staircase (left) and a $\{0^\circ,90^\circ\}$-staircase (right).}\label{FigureStaircases}
\end{figure}

Observe that the orientations in~$\mathcal{O}$ are between $0^\circ$ and $180^\circ$. Moreover, the only $[0^\circ,180^\circ)$-convex curves are lines, rays or segments. Throughout this paper, the angles of orientations in~$\mathcal{O}$ will be written in degrees, while the rest of angles will be measured in radians.

\begin{definition} \label{def:o-visible}
Let $p$ and $q$ be two points inside a simple polygon $P$. We say that $p$ and~$q$ $\mathcal{O}$-\emph{see} each other or, equivalently, that they are $\mathcal{O}$-\emph{visible} from each other, if there is an $\mathcal{O}$-staircase contained in $P$ that connects $p$ and $q$.
\end{definition}

In the example in Figure~\ref{FigureStaircases}, $p$ and $q$ are $\{0^\circ\}$-visible, while $p'$ and $q'$ are in addition $\{0^\circ,90^\circ\}$-visible. It is easy to see that $p$ and $q$ are not $\{90^\circ\}$-visible.

\begin{definition} \label{def:o-kernel}
The \emph{$\mathcal{O}$-Kernel} of $P$, denoted by $\mathcal{O}$-$\Kl(P)$, is the subset of points in $P$ which $\mathcal{O}$-see all the other points in $P$. The $\mathcal{O}$-Kernel of $P$ when the set $\mathcal{O}$ is rotated by an angle $\theta$ will be denoted by $\mathcal{O}$-$\Kl_{\theta}(P)$.
\end{definition}

\subsection{Previous related work}

Schuierer, Rawlins, and Wood~\cite{SRW} defined the restricted-orientation visibility or $\mathcal{O}$-\emph{visibility} in a simple polygon $P$ with $n$ vertices, giving an algorithm to compute the $\mathcal{O}$-$\Kl(P)$ in time $O(k+n\log k)$, with $O(k\log k)$ preprocessing time to sort the set $\mathcal{O}$ of $k$ orientations. In order to do so, they used the following observation.

\begin{observation}[\cite{SRW}]\label{obs:O-kernel}
For any simple polygon $P$, the $\mathcal{O}\textrm{-}\Kl(P)$ is $\mathcal{O}$-convex, connected, and
\[\mathcal{O}\textrm{-}\Kl(P)=\bigcap_{\alpha_i\in\mathcal{O}} \alpha_i\textrm{-}\Kl(P).\]
\end{observation}

The computation of the $\mathcal{O}$-$\Kl$ has been considered by Gewali~\cite{G} as well, who described an $O(n)$-time algorithm for orthogonal polygons without holes and an $O(n^2)$-time algorithm for orthogonal polygons with holes. The problem is a special case of the one considered by Schuierer and Wood~\cite{SW98} whose work implies an $O(n)$-time algorithm for orthogonal polygons without holes and an $O(n\log n+m^2)$-time algorithm for orthogonal polygons with $m\ge 1$ holes. More recently, Palios~\cite{P} gave an output-sensitive algorithm for computing the $\mathcal{O}$-$\Kl$ of an $n$-vertex orthogonal polygon $P$ with $m$ holes, for  $\mathcal{O}=\{0^\circ,90^\circ\}$; his algorithm runs in $O(n+m\log m+\ell)$ time, where $\ell\in O(1+m^2)$ is the number of connected components of $\{0^\circ,90^\circ\}$-$\Kl(P)$. Additionally, a modified version of this algorithm computes the number $\ell$ of connected components of the $\{0^\circ,90^\circ\}$-$\Kl$ in $O(n+m\log m)$ time~\cite{P}.

\subsection{Our contribution}

We consider the problem of computing and maintaining the $\mathcal{O}$-Kernel of $P$ while the set $\mathcal{O}$ rotates, that is, computing and maintaining $\mathcal{O}$-$\Kl_{\theta}(P)$ under variation of~$\theta$. For a simple polygon~$P$ and $\theta$ varying in~$[-\frac{\pi}{2},\frac{\pi}{2})$, we propose algorithms achieving the complexities in Table~\ref{Table:ResultsSimple}, where $\alpha(n)$ is the extremely-slowly-growing inverse of Ackermann's function \cite{A}. In addition, for the case of a simple orthogonal polygon~$P$, we propose improved algorithms to achieve the complexities in Table~\ref{Table:ResultsOrthogonal}. Note that looking for the minimum area or perimeter only makes sense where the kernel is non-empty.

\begin{table}[htb]
\centering
\renewcommand{\arraystretch}{1.1}
\setlength{\tabcolsep}{.13cm}
\footnotesize
\begin{tabular}{| c || c  c || c  c || c  c |}
\hline
\multirow{3}{*}{}                & \multicolumn{2}{|c||}{Get the intervals of~$\theta$ where} & \multicolumn{2}{|c||}{Get a value of~$\theta$ where} & \multicolumn{2}{|c|}{Get a value of~$\theta$ where} \\
& \multicolumn{2}{|c||}{the kernel is non-empty} & \multicolumn{2}{|c||}{the kernel has max/min area} & \multicolumn{2}{|c|}{the kernel has max/min perimeter} \\\cline{2-7}\rule{0pt}{10pt}
                                 & Time & Space & Time & Space & Time & Space \\

\hline
\multirow{2}{*}{$\{0^\circ\}$-$\Kl_{\theta}(P)$}  & $O(n\log n)$ & $O(n \alpha(n))$ & $O(n^2\alpha(n))$ & $O(n\alpha(n))$ & $O(n^2\alpha(n))$ & $O(n\alpha(n))$ \\
 & \multicolumn{2}{|c||}{(Theorem~\ref{thm:0-kernel-exists-simple})} & \multicolumn{2}{|c||}{(Theorem~\ref{thm:0-kernel-area-simple})} & \multicolumn{2}{|c|}{(Theorem~\ref{thm:0-kernel-perimeter-simple})} \\
\hline
\multirow{2}{*}{$\{0^\circ,90^\circ\}$-$\Kl_{\theta}(P)$}  & $O(n^2\alpha(n))$ & $O(n^2\alpha(n))$ & $O(n^2\alpha(n))$ & $O(n\alpha(n))$ & $O(n^2\alpha(n))$ & $O(n\alpha(n))$ \\
 & \multicolumn{2}{|c||}{(Theorem~\ref{thm:0-90-kernel-exists-simple})} & \multicolumn{2}{|c||}{(Theorem~\ref{teorema1x})} & \multicolumn{2}{|c|}{(Theorem~\ref{teorema1x})} \\
\hline
\multirow{2}{*}{$\mathcal{O}$-$\Kl_{\theta}(P)$}  & $O(kn^2\alpha(n))$
 & $O(kn^2\alpha(n))$ & $O(kn^2\alpha(n))$ & $O(kn\alpha(n))$ & $O(kn^2\alpha(n))$ & $O(kn\alpha(n))$ \\
 & \multicolumn{2}{|c||}{(Theorem~\ref{theorem5})} & \multicolumn{2}{|c||}{(Theorem~\ref{theorem7})} & \multicolumn{2}{|c|}{(Theorem~\ref{theorem7})} \\
\hline
\end{tabular}
\caption{Results for $P$ a simple polygon.
}
\label{Table:ResultsSimple}
\end{table}

\begin{table}[htb]
\centering
\renewcommand{\arraystretch}{1.1}
\setlength{\tabcolsep}{.13cm}
\footnotesize
\begin{tabular}{| c || c  c || c  c || c  c |}
\hline
\multirow{3}{*}{}                & \multicolumn{2}{|c||}{Get the intervals of~$\theta$ where} & \multicolumn{2}{|c||}{Get a value of~$\theta$ where} & \multicolumn{2}{|c|}{Get a value of~$\theta$ where} \\
& \multicolumn{2}{|c||}{the kernel is non-empty} & \multicolumn{2}{|c||}{the kernel  has max/min area} & \multicolumn{2}{|c|}{the kernel has max/min perimeter} \\\cline{2-7}\rule{0pt}{10pt}
                    & \phantom{........}Time & Space & \phantom{........}Time & Space & \phantom{...........}Time & Space \\

\hline
\multirow{2}{*}{$\{0^\circ\}$-$\Kl_{\theta}(P)$}  & \phantom{........}$O(n)$ & $O(n)$ & \phantom{........}$O(n)$ & $O(n)$ & \phantom{...........}$O(n)$ & $O(n)$ \\
 & \multicolumn{2}{|c||}{(Theorem~\ref{thm:0-kernel-exists-ortho})} & \multicolumn{2}{|c||}{(Theorem~\ref{thm:0-kernel-optareaperim-ortho})} & \multicolumn{2}{|c|}{(Theorem~\ref{thm:0-kernel-optareaperim-ortho})} \\
\hline
\multirow{2}{*}{$\{0^\circ,90^\circ\}$-$\Kl_{\theta}(P)$}  & \phantom{........}$O(n)$ & $O(n)$ & \phantom{........}$O(n)$ & $O(n)$ & \phantom{...........}$O(n)$ & $O(n)$ \\
 & \multicolumn{2}{|c||}{(Theorem~\ref{thm:0_90-kernel-exists-ortho})} & \multicolumn{2}{|c||}{(Theorem~\ref{thm:0-90-kernel-optareaperim-ortho})} & \multicolumn{2}{|c|}{(Theorem~\ref{thm:0-90-kernel-optareaperim-ortho})} \\
\hline
\multirow{2}{*}{$\mathcal{O}$-$\Kl_{\theta}(P)$}  & \phantom{........}$O(kn)$ & $O(kn)$ & \phantom{........}$O(kn)$ & $O(kn)$ & \phantom{...........}$O(kn)$ & $O(kn)$ \\
 & \multicolumn{2}{|c||}{(Theorem~\ref{thm:ortho_k_orientations})} & \multicolumn{2}{|c||}{(Theorem~\ref{thm:ortho_k_orientations})} & \multicolumn{2}{|c|}{(Theorem~\ref{thm:ortho_k_orientations})} \\
 \hline
\end{tabular}
\caption{Results for $P$ a simple orthogonal polygon.
} \label{Table:ResultsOrthogonal}
\end{table}

\section{The rotated \boldmath$\{0^\circ\}$-$\Kl_{\theta}(P)$ in a simple polygon $P$}

Let $(p_1,\ldots,p_n)$ be the counterclockwise sequence of vertices of a simple polygon $P$, which is considered to include its interior (sometimes called the \emph{body}). In this section we deal with the rotation of the set $\mathcal{O}=\{0^\circ\}$ by an angle $\theta\in [-\frac{\pi}{2},\frac{\pi}{2})$ and the computation of the corresponding $\mathcal{O}$-$\Kl_{\theta}(P)$, proving the results in the first row of Table~\ref{Table:ResultsSimple}.

\subsection{The $\{0^\circ\}$-$\Kl(P)$, its area, and its perimeter}

For the case $\mathcal{O}=\{0^\circ\}$ and $\theta=0$, i.e., for the $\{0^\circ\}$-$\Kl_{0}(P)$ or, more simply, $\{0^\circ\}$-$\Kl(P)$, the kernel is composed by the points inside $P$ which see every point in $P$ via a $y$-monotone curve. Note that if $P$ is a convex polygon, then the $\{0^\circ\}$-$\Kl(P)$ is the whole $P$. Schuierer, Rawlins, and Wood~\cite{SRW} presented the following definitions, observations, and results.

\begin{definition}\label{defi1}
A reflex vertex $p_i\in P$ is a \emph{reflex maximum} (respectively a \emph{reflex minimum}) if $p_{i-1}$ and $p_{i+1}$ are both below (resp.\ above) $p_i$. Analogously, a horizontal edge with two reflex vertices is a {\em reflex maximum} (resp.\ {\em minimum}) if its two neighbors are below (resp.\ above).
\end{definition}

Note that, throughout this work, the edges are considered to be closed and, therefore, containing their endpoints. Let $h_N$ be the horizontal line passing through a vertex~$p_N$ being a \emph{lowest reflex minimum} of~$P$ or, if $P$ does not have a reflex minimum, through the highest (convex) vertex of~$P$. Let $h_S$ be the horizontal line passing through a vertex~$p_S$ being a \emph{highest reflex maximum $p_S$} of~$P$ or, if $P$ does not have a reflex maximum, through the lowest (convex) vertex of~$P$. Let $S(P)$ be the strip defined by the horizontal lines $h_N$ and $h_S$, see Figure~\ref{Figure1}. Note that there are neither reflex minima nor maxima inside~$S(P)$.

\begin{lemma}[\cite{SRW}]
\label{lema1}
The $\{0^\circ\}$-$\Kl(P)$ is the region defined by the intersection $S(P)\cap P$.
\end{lemma}

\begin{corollary}[\cite{SRW}]
\label{corollary1}
The $\{0^\circ\}$-$\Kl(P)$ can be computed in $O(n)$ time.
\end{corollary}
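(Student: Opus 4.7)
The plan is to turn Lemma~\ref{lema1} into an algorithm by computing the strip $C(P)$ and then intersecting it with $P$, both in linear time.

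First, I would make a single pass over the vertex sequence $p_0,p_1,\dots,p_{n-1}$ of $P$ to detect all reflex maxima and reflex minima according to Definition~\ref{defi1}. For each vertex $p_i$ the reflexivity test is local (it uses only $p_{i-1},p_i,p_{i+1}$ together with the interior side of $P$), and so is the comparison of $y$-coordinates that distinguishes a reflex maximum from a reflex minimum. Horizontal edges with two reflex endpoints are recognized by looking at the two vertices immediately preceding and following the edge, which still costs amortized $O(1)$ per edge. During this pass I would maintain the highest reflex maximum $p_S$ and the lowest reflex minimum $p_N$ seen so far, applying the fallback rules from the Observation when either kind of vertex is missing (take the lowest vertex of $P$ as the reflex maximum, or the highest vertex as the reflex minimum), and returning the whole polygon when $P$ is convex. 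This stage takes $O(n)$ time and yields the two horizontal lines $h_S$ and $h_N$ that bound $C(P)$.

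Second, I would compute $C(P)\cap P$ by walking once around the boundary of $P$ and clipping each edge against the strip. Since $C(P)$ is the intersection of the two half-planes $y\ge y(p_S)$ and $y\le y(p_N)$, it suffices to apply two successive half-plane clipping passes on the simple polygon $P$, each of which can be performed in $O(n)$ time by the standard Sutherland--Hodgman-style sweep adapted to simple (possibly non-convex) polygons: on each edge we record the at most two intersection points with the clipping line and stitch them together with the horizontal segment on that line in the order given by the boundary traversal. Because $C(P)$ is a strip and $P$ is simple, the output has total complexity $O(n)$, so the two passes together still run in $O(n)$ time. By Lemma~\ref{lema1} the resulting region is exactly $\{0\degree\}\textrm{-}\Kl(P)$.

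The only delicate step is the clipping of a simple polygon against a horizontal line, since the output can consist of several connected components; but this is a classical linear-time operation once the polygon is given as a cyclic sequence of vertices, so no super-linear work is incurred. Combining the two stages yields the claimed $O(n)$ bound.
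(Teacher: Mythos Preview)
Your argument is correct. Note, however, that the paper does not supply its own proof of this corollary; it is simply attributed to~\cite{SRW}. The surrounding text makes clear what the intended one-line justification is: once $h_N$ and~$h_S$ are known, the portion of $\partial P$ lying in the strip $C(P)$ consists of exactly the two $y$-monotone chains $c^l$ and~$c^r$ (because, by construction, there are no reflex minima or maxima strictly inside $C(P)$), so $C(P)\cap P$ is a \emph{single} simple polygon that can be read off in one traversal.

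Your two-pass half-plane clipping works, but it is heavier machinery than necessary, and your remark that ``the output can consist of several connected components'' does not actually occur here: the choice of $h_N$ and $h_S$ guarantees that each of the two horizontal lines meets $\partial P$ in exactly two points, so each clipping pass returns a single piece. Recognizing this would let you replace the appeal to a general simple-polygon/half-plane clipping routine by a direct linear scan that extracts $c^l$ and~$c^r$, which is closer in spirit to how the paper (implicitly, in the proof of the next corollary on area and perimeter) uses the result.
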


Moreover, the horizontal lines $h_N$ and $h_S$ contain the segments of the \emph{north} boundary and of the \emph{south} boundary of the $\{0^\circ\}$-$\Kl(P)$; see again Figure~\ref{Figure1}. Lemma~\ref{lema1} is straightforward and Corollary~\ref{corollary1} is trivial by computing both the lowest reflex minimum and the highest reflex maximum in linear time and then computing $S(P)\cap P$ in additional linear time.

\begin{figure}[htb]
\centering
\includegraphics[width=0.45\textwidth]{./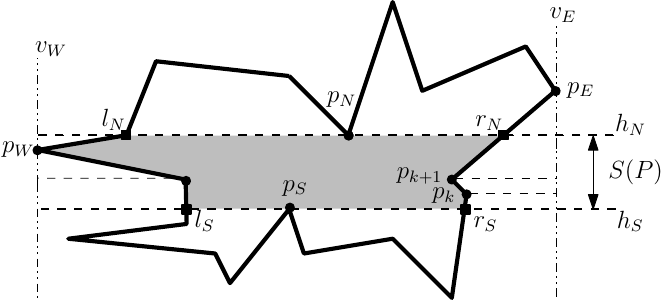}
\quad
\includegraphics[width=0.42\textwidth]{./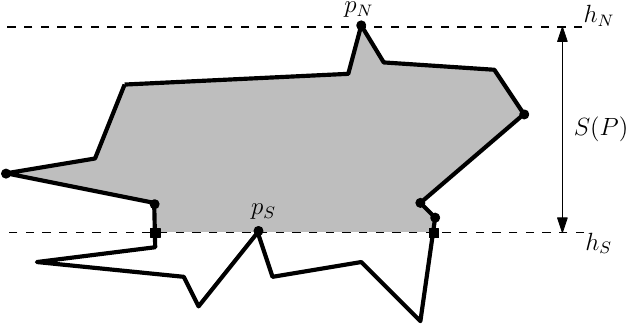}
\caption{Two examples of
$\{0^\circ\}$-$\Kl_{\theta}(P)$ for $\theta=0$. In the left example, the strip $S(P)$ is supported by a lowest reflex minimum~$p_N$ and a highest reflex maximum~$p_S$. In the right example there are no reflex minima and, therefore, the strip $S(P)$ is supported by the highest (convex) vertex~$p_N$ and the highest reflex maximum~$p_S$.}
\label{Figure1}
\end{figure}

Now, let $c^l$ and $c^r$ denote the \emph{left} and the \emph{right polygonal chains} defined, respectively, by those parts of the boundary of~$P$ which are inside $S(P)$. Let $|c^l|$ and $|c^r|$ denote their number of segments. It follows from the definition of $S(P)$ and Lemma~\ref{lema1} that both chains are $0^\circ$-convex curves, i.e., $y$-monotone chains; see Figure~\ref{Figure1} once more.

\begin{corollary}\label{cor9}
The area and the perimeter of the $\{0^\circ\}$-$\Kl(P)$ can be computed in $O(n)$ time.
\end{corollary}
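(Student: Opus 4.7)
The plan is to invoke Corollary~\ref{corollary1} to obtain an explicit description of $\{0\degree\}$-$\Kl(P)$ as a simple polygon in $O(n)$ time, and then apply standard linear-time formulas for area and perimeter on the resulting polygon.

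First, I would run the $O(n)$-time algorithm of Corollary~\ref{corollary1} to locate the lowest reflex minimum $p_N$ and the highest reflex maximum $p_S$ (taking care of the degenerate cases highlighted in the observation, namely that in the absence of a reflex minimum one uses the highest vertex of $P$, and symmetrically for the maximum, and that a horizontal edge with two reflex endpoints may play the role of either). These two vertices determine the horizontal lines $h_N$ and $h_S$ bounding the strip $C(P)$.

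Next, by Lemma~\ref{lema1}, $\{0\degree\}$-$\Kl(P)=C(P)\cap P$. To materialize this intersection as a polygon, I traverse the boundary of $P$ once and clip it against the strip. Concretely, I collect the two $y$-monotone chains $c^l$ and $c^r$ obtained by following the left and right parts of $\partial P$ inside $C(P)$, and close them off with the two horizontal segments contained in $h_N\cap P$ and $h_S\cap P$ that contain $p_N$ and $p_S$, respectively. Since $c^l$ and $c^r$ are $y$-monotone and every vertex of $\{0\degree\}$-$\Kl(P)$ is either a vertex of $P$ lying in the closed strip or an intersection of $\partial P$ with $h_N$ or $h_S$, the resulting simple polygon has at most $n+O(1)$ vertices and is produced in $O(n)$ time.

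Once the kernel is available as a vertex list $(v_1,\dots,v_m)$ with $m=O(n)$, the area is computed by the shoelace formula
\[
\mathrm{Area}=\tfrac{1}{2}\left|\sum_{i=1}^{m}(x_i y_{i+1}-x_{i+1}y_i)\right|,
\]
and the perimeter by
\[
\mathrm{Perimeter}=\sum_{i=1}^{m}\lVert v_{i+1}-v_i\rVert,
\]
both of which require only $O(m)=O(n)$ time. There is no real obstacle: the only substantive step is confirming that the kernel, being the intersection of $P$ with a horizontal strip and bounded by the two $y$-monotone chains $c^l,c^r$ together with at most two horizontal segments, is a simple polygon with $O(n)$ vertices that can be enumerated in a single linear pass.
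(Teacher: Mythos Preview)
Your proof is correct and follows essentially the same approach as the paper: obtain the kernel $C(P)\cap P$ as a simple polygon with $O(n)$ vertices bounded by $c^l$, $c^r$, and the two horizontal segments on $h_N$ and $h_S$, then compute its area and perimeter in linear time. The only minor difference is that the paper computes the area via a horizontal trapezoid decomposition obtained by merging the $y$-sorted vertex lists of $c^l$ and $c^r$, whereas you apply the shoelace formula directly on the boundary; both are standard $O(n)$ computations and neither offers a real advantage here.
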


\begin{proof}
To compute the area of the $\{0^\circ\}$-$\Kl(P)=S(P)\cap P$, we proceed as follows.
The area can be decomposed into (a finite number of) horizontal trapezoids defined by pairs of vertices in $c^l\cup c^r$ with consecutive $y$-coordinate. The area of these trapezoids can be computed in constant time, so the area of $\{0^\circ\}$-$\Kl(P)=S(P)\cap P$ can be computed in $O(|c^l|+|c^r|)$ time.

Computing the perimeter is even simpler, because we only need the addition of the lengths of $c^l$ and~$c^r$ plus the lengths of the north and south boundaries of the $\{0^\circ\}$-$\Kl(P)$, which can also be done in $O(|c^l|+|c^r|)$ time.
\end{proof}

\subsection{The existence of the $\{0^\circ\}$-$\Kl_{\theta}(P)$}
\label{sec:0-kernel-simple-polygons}

In this subsection, we show how to compute the intervals for~$\theta$ such that the $\{0^\circ\}$-$\Kl_{\theta}(P)$ is non-empty. First, we observe that we do not need a complete rotation, since $\{0^\circ\}$-$\Kl_{-\frac{\pi}{2}}(P)=\{0^\circ\}$-$\Kl_{\frac{\pi}{2}}(P)$. Also, notice that Definition~\ref{defi1}, for reflex maxima/minima with respect to the horizontal orientation, can be easily extended to any orientation $\theta\in [-\frac{\pi}{2},\frac{\pi}{2})$ as follows.

\begin{definition}\label{defi2}
A reflex vertex $p_i$ in a simple polygon $P$ where $p_{i-1}$ and $p_{i+1}$ are both below (respectively, above) $p_i$ with respect to a given orientation $\theta$ is a \emph{reflex maximum} (resp.\ a \emph{reflex minimum}) with respect to $\theta$. Analogously, an edge of angle~$\theta$ with two reflex vertices is a {\em reflex maximum} (resp.\ {\em minimum}) when its two neighbors are below (resp.\ above) with respect to the orientation~$\theta$.
\end{definition}

In order to know the intervals for~$\theta$ such that the $\{0^\circ\}$-$\Kl_{\theta}(P)$ is not empty, we need to maintain the boundary of the rotation by angle~$\theta$ of the strip~$S(P)$ previously defined, which will be denoted by~$S_{\theta}(P)$; see Figure~\ref{Figure2}. We need to extend Lemma~\ref{lema1} to any orientation~$\theta$:

\begin{lemma}
\label{rotatedLemma1}
The $\{0^\circ\}$-$\Kl_{\theta}(P)$ is the region defined by the intersection $S_{\theta}(P)\cap P$.
\end{lemma}

\begin{proof}
The claim follows from Lemma~\ref{lema1} and the fact that $\{0^\circ\}$-$\Kl_{\theta}(P)=\{0^\circ\}$-$\Kl(P_{\theta})$ and $S_{\theta}(P)=S(P_{\theta})$, where $P_{\theta}$ denotes the polygon $P$ rotated by the angle $\theta$. See Figure~\ref{Figure2}.
\end{proof}

\begin{figure}[htb]
\centering
\includegraphics[width=\textwidth]{./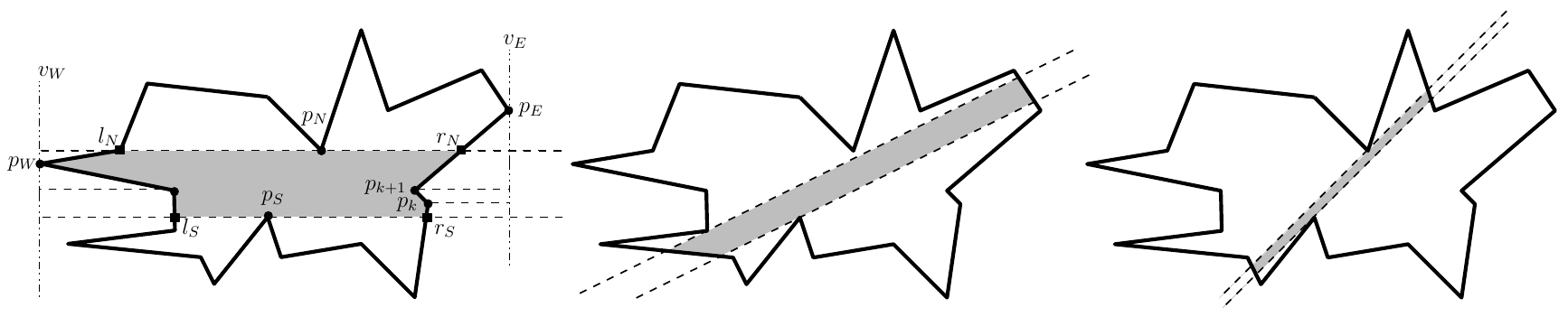}
\caption{A rotating $\{0^\circ\}$-$\Kl_{\theta}(P)$ for $\theta=0$ (left), $\theta={\frac{\pi}{8}}$ (middle), and $\theta={\frac{\pi}{4}}$ (right).}\label{Figure2}
\end{figure}

Now, we describe the main steps of our algorithm to compute the intervals of those values of $\theta$ within $[-\frac{\pi}{2},\frac{\pi}{2})$ such that $S_{\theta}(P)\not=\emptyset$ and, therefore, such that $\{0^\circ\}$-$\Kl_{\theta}(P)\not=\emptyset$.

\smallskip
\noindent{\bf Step 1: Angular intervals.}
For each vertex $p_i\in P$, if $p_i$ is reflex, we compute the angular intervals $[\theta_1^i,\theta_2^i)$ and $[\theta_1^i+\pi,\theta_2^i+\pi)$ of orientations~$\theta$ for which $p_i$ is a reflex maximum/minimum, defined when rotating the line containing the edge $p_{i-1}p_i$ up to the line containing the edge $p_ip_{i+1}$.
Otherwise, if $p_i$ is convex, we compute the angular intervals $[\theta_1^i,\theta_2^i)$ and $[\theta_1^i+\pi,\theta_2^i+\pi)$ of orientations~$\theta$ for which $p_i$ is the lowest/highest vertex of the rotated polygon $P_{\theta}$. Thus, in case that for some orientation~$\theta$ there is no reflex maximum/minimum, the lowest/highest convex vertex for that orientation will play the role of reflex maximum/minimum.
Note that an angular interval may be split into two, in case it contains the orientation~$\pi/2$.

\smallskip
\noindent{\bf Step 2: Dualization.} For the sake of efficiently handling the next step, we do the dualization of the set of vertices together with their relevant non-empty angular intervals from Step 1. The dualization function~$\ell$ we use is as follows: If $p=(a,b)$ is a point in the primal, its dual $\ell(p)$ is the line $\ell(p):\equiv y=ax-b$; if $r$ is the line given by $y=ax-b$ in the primal, its dual $\ell(r)$ is the  point $\ell(r):=(a,b)$. Moreover, the point $p=(a,b)$ lies below/on/above a line $l\equiv y=mx+c $ if and only if the line $\ell(p)\equiv  y=ax-b$ passes above/through/below the point $\ell(l)=(m,-c)$, see~\cite{BCKO}.

In this way, for a vertex $p_i\in P$ we translate the two lines which contain the incident edges $p_ip_{i-1}$ and $p_ip_{i+1}$ of the polygon $P$ into the corresponding dual points located on the dual line $\ell(p_i)$. In addition, we translate the set of lines through $p_i$ in the angular interval of $p_i$ into the corresponding set of dual points, which define a segment on the line $\ell(p_i)$. For an illustration, see the objects in red part in Figure~\ref{Figure3}. Thus, the angular interval of a point~$p_i$ is translated into the straight line segment on the line $\ell(p_i)$. Again, note that a vertex $p_i$ may contribute two segments in the dual plane, if the corresponding angular interval contains the orientation~$\pi/2$. The dualization process for all the other cases is done in an analogous way.

The dualization is performed as follows. On one hand, we dualize the reflex minima with their intervals which, in addition to the dual of intervals of the upper chain of the convex hull of~$P$, $\operatorname{CH}(P)$, (in blue in Figure~\ref{Figure3}) results in an arrangement  $\mathcal{D}_{\rm min}$ of line segments. On the other hand, we dualize the reflex maxima with their intervals (an example in red in Figure~\ref{Figure3}) which, together with the dual of the intervals of the lower chain
of~$\operatorname{CH}(P)$, gives an arrangement $\mathcal{D}_{\rm max}$ of line segments. Both arrangements have a linear number of line segments in the dual plane.

\begin{figure}[htb]
\centering
\includegraphics[width=0.85\textwidth]{./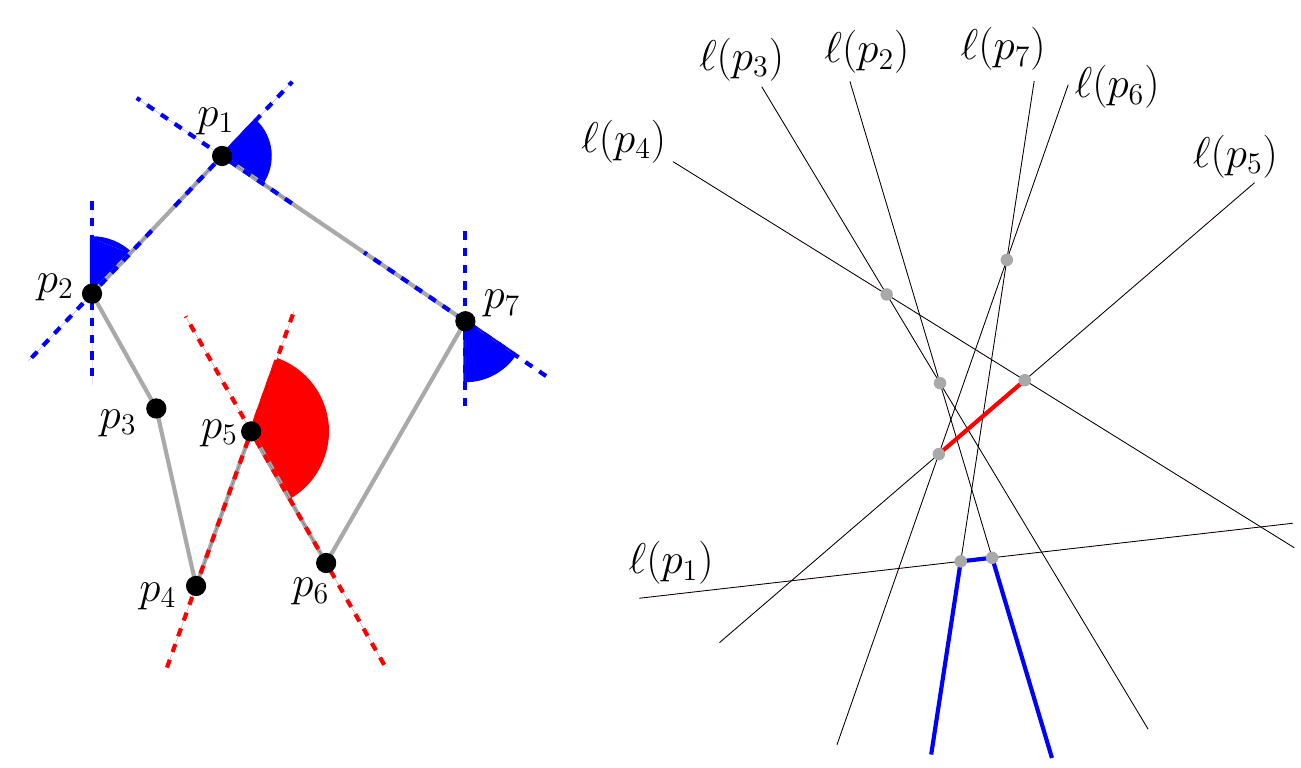}
\caption{In red, dualization of the angular interval corresponding to the vertex~$p_5$ (left) in the primal, which in the dual translates into a segment on the line~$\ell(p_5)$ (right). In blue, the angular intervals of the vertices $p_7,p_1,p_2$ in the upper chain of the convex hull in the primal (left), translate into the lower envelope of the arrangement in the dual (right).}\label{Figure3}
\end{figure}

\smallskip
\noindent{\bf Step 3: Event intervals.}
We compute the sequence of \emph{event intervals}, each of which is defined by a pair of orientation values $[\theta_1,\theta_2)\subset [-\frac{\pi}{2},\frac{\pi}{2})$ such that for any value $\theta\in [\theta_1,\theta_2)$, the strip $S_{\theta}(P)$ is supported by the same pair of vertices of $P$, in other words, such that the pair of vertices of~$P$ defining the lowest reflex minimum and the highest reflex maximum does not change for $\theta\in [\theta_1,\theta_2)$, recall Figure~\ref{Figure2}. In order to determine the sequence of event intervals, we exploit the following observation.

\begin{observation}\label{obs:x_theta}
The highest (resp. lowest) segment in $\mathcal{D}_{\rm min}$ (resp. $\mathcal{D}_{\rm max}$) intersected by the vertical line $x=\theta$ corresponds in the primal to the lowest reflex minimum (resp. the highest reflex maximum) with respect to the orientation $\theta$.
\end{observation}

\begin{proof}
It directly follows from the already mentioned fact that the dualization reverses the above-below relations between lines and/or points.
\end{proof}

Taking into account the above observation, we compute the upper envelope of $\mathcal{D}_{\rm min}$, denoted by $\mathcal{U_{\mathcal{D}_{\rm min}}}$, and the lower envelope of $\mathcal{D}_{\rm max}$, denoted by $\mathcal{L_{\mathcal{D}_{\rm max}}}$~\cite{H}. Next, by sweeping the arrangement $\mathcal{U_{\mathcal{D}_{\rm min}}} \cup \mathcal{L_{\mathcal{D}_{\rm max}}}$, we obtain the sequence of pairs ``lowest reflex minimum and highest reflex maximum'' for all the {event intervals} $[\theta_1,\theta_2)$, as $\theta$ varies in $[-\frac{\pi}{2},\frac{\pi}{2})$.

\smallskip
\noindent{\bf Step 4: Non-empty \boldmath{$\{0^\circ\}$-$\Kl_{\theta}(P)$}.} Recall that, by Lemma~\ref{rotatedLemma1}, the strip $S_{\theta}(P)$ is empty if, with respect to~$\theta$, the lowest reflex minimum is below the highest reflex maximum. Therefore, this step relies only on scanning the relevant pairs from Step 3 and checking whether the lowest reflex minimum is above the highest reflex maximum, which results in the angular intervals $[\theta_1,\theta_2)\subset [-\frac{\pi}{2},\frac{\pi}{2})$ such that $\{0^\circ\}$-$\Kl_{\theta}(P)\not=\emptyset$ for all the values of $\theta\in [\theta_1,\theta_2)$.

 \begin{algorithm}[ht!]
 \caption{\label{algo}Computing the intervals of $\theta$ such that $\{0^\circ\}$-$\Kl_{\theta}(P)\not=\emptyset$}
 \ \\ \
 \hglue 5pt \textbf{Input:} A simple polygon $P$ with $n$ vertices\\
 \hglue 5pt \textbf{Output:} Set $\cal I$ of event intervals for angles $\theta$ such that $\{0^\circ\}$-$\Kl_{\theta}(P)\not=\emptyset$
 \begin{algorithmic}[1]
     \Statex
     \Statex \hglue -6mm {\sc{\bf STEP 1:} Angular intervals}
     \For{$i = 1$ to $n$}
     \If{$p_i\in P$ is reflex}
     \State compute $[\theta_1^i,\theta_2^i)$ and $[\theta_1^i+\pi,\theta_2^i+\pi)$ such that $p_i$ is reflex maximum/minimum
     \EndIf
     \If{$p_i\in P$ is convex}
     \State compute $[\theta_1^i,\theta_2^i)$ and $[\theta_1^i+\pi,\theta_2^i+\pi)$ such that $p_i$ is the lowest/highest vertex of $P_{\theta}$,
     \State proceed like $p_i$ being a vertex reflex minimum/maximum
     \EndIf
     \EndFor
     \Statex
     \Statex \hglue -6mm {\sc{\bf STEP 2:} Dualization of vertices with their angular events from Step 1}
     \For{$i = 1$ to $n$}
     \If{$p_i$ is a reflex maximum
     }
     \State translate the angular interval of $p_i$ into the
     line segment on $\ell(p_i)$ and include this in an arrangement $\mathcal{D}_{\rm max}$
     \EndIf
     \If{$p_i$ is a reflex minimum
     }
     \State translate the angular interval of $p_i$ into the
     line segment on $\ell(p_i)$ and include this in an arrangement $\mathcal{D}_{\rm min}$
    \EndIf
    \State (Note that a reflex vertex may contribute two segments in the dual.)
    \EndFor
    \State Include in $\mathcal{D}_{\rm max}$ the dual of the lower chain of~$\operatorname{CH}(P)$ and include in $\mathcal{D}_{\rm min}$ the dual of the upper chain of~$\operatorname{CH}(P)$
    \Statex
    \Statex  \hglue -6mm {\sc{\bf STEP 3:} Event intervals}
    \State Compute the \emph{event intervals} such that $S_{\theta}(P)$ is supported by the same pair of vertices
    \State Compute the upper envelope $\mathcal{U_{\mathcal{D}_{\rm min}}}$ of $\mathcal{D}_{\rm min}$
    \State Compute the lower envelope $\mathcal{L_{\mathcal{D}_{\rm max}}}$ of $\mathcal{D}_{\rm max}$
    \State Sweep $\mathcal{U_{\mathcal{D}_{\rm min}}} \cup \mathcal{L_{\mathcal{D}_{\rm max}}}$ and compute the ``lowest reflex minimum and highest reflex maximum'' for the {event intervals}
    \Statex
    \Statex  \hglue -6mm {\sc{\bf STEP 4:} Non-empty $\{0^\circ\}$-$\Kl_{\theta}(P)$}
    \State Scan the vertex pairs from STEP 3, checking whether the lowest reflex minimum is above the highest reflex maximum and, if so, add the corresponding interval to an initially empty set~$\mathcal{I}$
    \State \textbf{output} $\cal I$
\end{algorithmic}
\end{algorithm}

\medskip\noindent\emph{Analysis of Algorithm~\ref{algo}}. The correctness of Algorithm~\ref{algo} follows from the discussion above, in particular from the concept of dualization together with Observation~\ref{obs:x_theta}. About the complexity, STEPS~1 and~2 can be done in linear time and space, in particular, by computing the convex hull of the simple polygon~$P$~\cite{McA79}. STEP~3 can be done in $O(n\log n)$ time, since the computation of the upper (and the lower) envelope of a set of $n$ possibly-intersecting straight-line segments can be done in $O(n\log n)$ time~\cite{H}. Finally, STEP~4 can be accomplished in $O(n\alpha(n))$, since the upper envelope and the lower envelope of a set of $n$ possibly-intersecting straight-line segments in the plane have worst-case size $O(n\alpha(n))$, where $\alpha(n)$ is the extremely-slowly-growing inverse of Ackermann's function~\cite{A}.

\begin{theorem}\label{thm:0-kernel-exists-simple}
For a simple polygon $P$ with $n$ vertices, the set of event intervals $[\theta_1,\theta_2)\subset [-\frac{\pi}{2},\frac{\pi}{2})$ such that $\{0^\circ\}$-$\Kl_{\theta}(P)\not=\emptyset$ for $\theta\in [\theta_1,\theta_2)$ can be computed
in $O(n\log n)$ time and $O(n\alpha(n))$ space.
\end{theorem}

\begin{proof}
The result is a direct consequence of applying Algorithm~\ref{algo}, whose correctness as well as time and space complexities follow from the analysis above.
\end{proof}

\subsection{Optimizing the area of the $\{0^\circ\}$-$\Kl_{\theta}(P)$}
\label{subsec:opt_area_0-kernel}

Let us consider the problem of optimizing the area of the $\{0^\circ\}$-$\Kl_{\theta}(P)$, i.e., computing the value(s) of~$\theta$ such that the area of $\{0^\circ\}$-$\Kl_{\theta}(P)$ is maximum or minimum (note that the latter only makes sense where the kernel is non-empty). The idea of our approach is based upon Algorithm~\ref{algo} for computing the set of event intervals $[\theta_1,\theta_2)\subset [-\frac{\pi}{2},\frac{\pi}{2})$ such that $\{0^\circ\}$-$\Kl_{\theta}(P)\not=\emptyset$ for all the values of $\theta\in [\theta_1,\theta_2)$ (Theorem~\ref{thm:0-kernel-exists-simple}). Namely, we do the following:

\smallskip
\noindent{\bf Step A: \boldmath{$\{0^\circ\}$-$\Kl_{\theta}(P) \not=\emptyset$}.} Run STEPS 1-4 of Algorithm~\ref{algo}.

\smallskip
\noindent{\bf Step B: Vertex events.}
For each event interval $[\theta_1,\theta_2)$ from Step~4 (within which the highest reflex maximum and the lowest reflex minimum do not change), we subdivide $[\theta_1,\theta_2)$ every time that, as $\theta$ varies, a vertex of the simple polygon $P$ either stops or starts contributing to the current boundary of the $\{0^\circ\}$-$\Kl_{\theta}(P)$. Observe that at, every such subdivision step, the differential in the area can be decomposed into triangles, as illustrated in Figure~\ref{Figure1-triangles}. In particular, for each of these consecutive subintervals $[\beta_j,\beta_{j+1})$ of $[\theta_1,\theta_2)$, we have:
\begin{equation}
Area (\{0^\circ\}\text{-}\Kl_{\beta}(P))= Area (\{0^\circ\}\text{-}\Kl_{\beta_{j}}(P))+ A_1(\beta) + A_2(\beta) - B_1(\beta) - B_2(\beta).
\label{eq:area}
\end{equation}
\noindent Thus, for such $\beta\in [\beta_j,\beta_{j+1})$, the area of the $\{0^\circ\}$-$\Kl_{\beta}(P)$ can be expressed, using simple trigonometric relations, as a function $A(\beta)$ of the angle of rotation $\beta\in [\beta_j,\beta_{j+1})$, as detailed in Section~\ref{appendix:area} in the appendix. Thus, it only remains to  obtain the maximum value of that function in the subinterval. In the mentioned Section~\ref{appendix:area} we show how this calculation is reduced to find the real solutions of a polynomial equation in $t$ of degree $6$. The final solution to the problem is then the best one over all those computed for these consecutive subintervals $[\beta_j,\beta_{j+1})$.

\begin{figure}[htb]
\centering
\includegraphics[width=0.6\textwidth]{./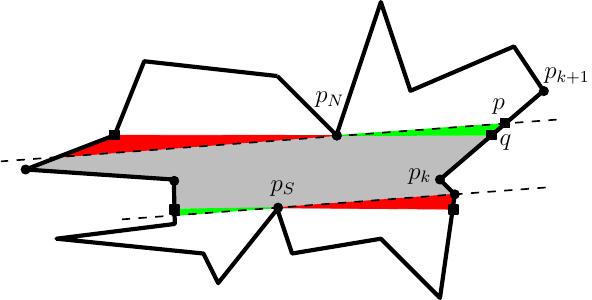}
\caption{The four triangles $A_1(\beta)$, $A_2(\beta)$ (in green), and $B_1(\beta)$, $B_2(\beta)$ (in red).}\label{Figure1-triangles}
\end{figure}

\smallskip
Clearly, Step B requires computing and maintaining the boundary of $\{0^\circ\}$-$\Kl_{\theta}(P)$, in particular, maintaining the set of vertices of the current left and right boundary chains, respectively denoted by $c^l_{\theta}$ and $c^r_{\theta}$, of $\{0^\circ\}$-$\Kl_{\theta}(P)$ as $\theta \in [\theta_1,\theta_2)$ varies (also for all the possible consecutive subintervals $[\beta_j,\beta_{j+1})$ of $[\theta_1,\theta_2)$); see again Figure~\ref{Figure1-triangles}. For this purpose, we compute the intersections of the lines $h_N(\theta)$ and $h_S(\theta)$ with the boundary of $P$, maintaining the information of the first and the last vertices of $c^l_{\theta}$ and $c^r_{\theta}$ in the current interval $[\theta_1,\theta_2)$. Now, as $\theta$ varies, the next vertex event can be computed in constant time by sweeping (and so modifying ad-hoc) chains $c^l_{\theta}$ and~$c^r_{\theta}$, in particular, using the circular order of the vertices of the polygon $P$ and taking the smallest among the relevant angles defined by the current line $h_N(\theta)$ (resp.\ $h_S(\theta)$), the point $p_N(\theta)$ (resp.\ $p_S(\theta)$),
and the relevant first polygon vertex on~$c^l_{\theta}$ and the first polygon vertex after the last polygon vertex on $c^r_{\theta}$ (resp. the first polygon vertex on $c^r_{\theta}$ and the first polygon vertex after the last polygon vertex on $c^l_{\theta}$).

\begin{algorithm}[ht!]
\caption{\label{algo2}Computing the maximum area of $\{0^\circ\}$-$\Kl_{\theta}(P)$}
\ \\ \
\hglue 5pt \textbf{Input:} A simple polygon $P$ with $n$ vertices\\
\hglue 5pt \textbf{Output:} An angle $\theta$ such that $Area(\{0^\circ\}$-$\Kl_{\theta}(P))$ is maximum and the maximum value of the area
\begin{algorithmic}[1]
    \Statex
    \Statex \hglue -6mm {\sc{\bf STEP A:}
    $\{0^\circ\}$-$\Kl_{\theta}(P) \not=\emptyset$}
    \State Run STEPS 1-4 from Algorithm~\ref{algo}
    \Statex
    \Statex \hglue -6mm {\sc{\bf STEP B:}
    Vertex events.}
    \For{each $[\theta_1,\theta_2)$ from STEP~4 of Algorithm~\ref{algo}}
    \If{a vertex of $P$ stops/starts appearing on the current boundary of the $\{0^\circ\}$-$\Kl_{\theta}(P)$}
    \State subdivide $[\theta_1,\theta_2)$ into consecutive subintervals $[\beta_j,\beta_{j+1})$ and decompose the differential of the area into triangles
    %\leonidas{perhaps change "decomposing...triangles" into: so that for any two angles in each subinterval, the differential in the area of the corresponding kernels consists of $4$ triangles as in Figure~\ref{Figure1-triangles}}\pawel{Good idea...} \carlos{No here, in the mid of the algorithm, better in the paragraph before, just after ... Figure 5 (...). For ...}
    \EndIf
    \EndFor
    \For{each subinterval  $[\beta_j,\beta_{j+1})$ and $\beta\in [\beta_j,\beta_{j+1})$}
    $$A(\beta)=
    Area(\{0^\circ\}\text{-}\Kl_{\beta}(P))= Area (\{0^\circ\}\text{-}\Kl_{\beta_{j}}(P))+ A_1(\beta) + A_2(\beta) - B_1(\beta) - B_2(\beta)$$
    \State Find the real solutions of a polynomial equation, and maintain the maximum value of $A(\beta)$ and the corresponding angle
    \EndFor
    \State \textbf{output} the maximum value of the area and the corresponding angle
\end{algorithmic}
\end{algorithm}

One can wonder whether the same vertex of a simple polygon $P$ may contribute to a vertex event for several event intervals. Surprisingly enough, there can be $\Theta(n)$ distinct vertices, each of them contributing $\Theta(n)$ vertex events, as illustrated in Figure~\ref{Figure4}. By Theorem~\ref{thm:0-kernel-exists-simple} we know that the number of event intervals is at most $O(n\alpha(n))$ thus, there may be as many as $O(n^2\alpha(n))$ vertex events (consecutive subintervals) involving in total $O(n^2\alpha(n))$ non-empty kernels $\{0^\circ\}$-$\Kl_{\theta}(P)$ having combinatorially different boundaries, implying the time complexity for computing the angle $\theta$ that maximizes (or minimizes) the area of $\{0^\circ\}$-$\Kl_{\theta}(P)$. To see this, it is enough to construct a simple polygon $P'$ by replicating the set of four points $\{p_1,p_2,p_3,p_4\}$ in Figure~\ref{Figure4} a linear number of times, and keeping the $\Theta(n)$ vertices in the corner. As we will see later, this bound also works for the computation of the maximum (or minimum) value of the perimeter of $\{0^\circ\}$-$\Kl_{\theta}(P)$. From this discussion we get the following result.

\begin{proposition}\label{propo1}
For a simple polygon $P$ with $n$ vertices, the number of vertex events or consecutive subintervals $[\beta_j,\beta_{j+1})$ where Algorithm~\ref{algo2} has to optimize the area of $\{0^\circ\}$-$\Kl_{\theta}(P)$ is $O(n^2\alpha(n))$.
\end{proposition}

\begin{proof}
The $O(n^2\alpha(n))$ bound comes from the simple polygon $P'$ constructed above based on Figure~\ref{Figure4}, taking into account the computation of the envelopes for obtaining the event intervals in Theorem~\ref{thm:0-kernel-exists-simple}.
\end{proof}

\begin{figure}[htb]
\centering
\includegraphics[width=0.6\textwidth]{./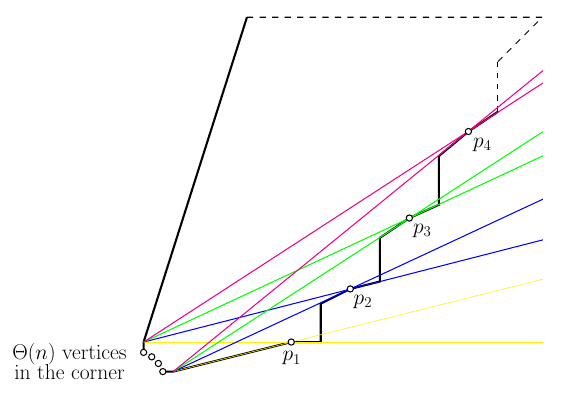}
\caption{For each vertex $p_i$, $1\le i\le 4$, all the $\Theta(n)$ vertices in the corner will be scanned again.}\label{Figure4}
\end{figure}

\noindent\emph{Analysis of Algorithm~\ref{algo2}}. The correctness of Algorithm~\ref{algo2} follows from the discussion above. Namely, STEP~A consists on running Algorithm~\ref{algo}, so it takes $O(n\log n)$ time and $O(n\alpha(n))$ space, obtaining $O(n\alpha(n))$ event intervals. By Proposition~\ref{propo1}, the number of vertex events or consecutive subintervals can be $O(n^2\alpha(n))$, and STEP~B spends constant time for the optimization in each of them, see Section~\ref{appendix:area}. Thus, this implies $O(n^2\alpha(n))$ time and $O(n\alpha(n))$ space in total. Notice that when we change from an event interval to the next event interval, we might have to manage a situation like the one illustrated in Figure~\ref{Figure4},
but this can be done in linear time and space since we translate one side of the kernel in parallel with the endpoints going through vertices on the boundary of $P$ (vertices in the  corner in Figure~\ref{Figure4}). Thus, it does not change the total time complexity because it implies an additional $O(n^2 \alpha(n))$ time;
also the space complexity does not change because the algorithm always reuses the linear space.

\begin{theorem}
\label{thm:0-kernel-area-simple}
For a simple polygon $P$ with $n$ vertices, an angle~$\theta$ that maximizes/minimizes the value of the area of $\{0^\circ\}$-$\Kl_{\theta}(P)$ can be computed in $O(n^2\alpha(n))$ time and $O(n\alpha(n))$ space.
\end{theorem}

\begin{proof}
Again, the correctness of our approach and the time and space complexities follow from the discussion above on the analysis of Algorithm~\ref{algo2} and Proposition~\ref{propo1}. The problem of minimizing the area, where meaningful, is handled in the same way.
\end{proof}

\subsection{Optimizing the perimeter of the $\{0^\circ\}$-$\Kl_{\theta}(P)$}
\label{subsec:opt_peri_0-kernel}

Consider now the problem of optimizing the perimeter of $\{0^\circ\}$-$\Kl_{\theta}(P)$, denoted by $\perim(\theta)$, where the goal is to compute the value(s) of $\theta$ such that $\perim(\theta)$ is maximum or minimum (note that the latter only makes sense where the kernel is non-empty). Observe that we can apply the same approach as the one proposed for optimizing the area of $\{0^\circ\}$-$\Kl_{\theta}(P)$ in Algorithm~\ref{algo2}, with the only difference that now, when handling the vertex events (defined and computed exactly in the same way as in the case of optimizing the area in Step B), we need to
handle the expression for the polygon perimeter. Clearly, the differential in the perimeter can be decomposed as adding two segments and subtracting two other segments, see again Figure~\ref{Figure1-triangles}, and thus the perimeter can then be expressed, using simple trigonometric relations, as a function $\Pi(\beta)$ of the angle of rotation $\beta\in[\beta_j,\beta_{j+1})$, see Section~\ref{appendix:perimeter} in the appendix. Then, it  only remains to obtain the maximum value of that function in the interval $[\beta_j,\beta_{j+1})$. As detailed in Section~\ref{appendix:perimeter}, this amounts to finding the real solutions of a polynomial equation in $t$ of constant degree. Consequently, we may conclude with the following result, where the minimization of the perimeter, if meaningful, is handled analogously.

\begin{theorem}\label{thm:0-kernel-perimeter-simple}
For a simple polygon $P$ with $n$ vertices, an angle~$\theta$ such that the value of the perimeter of $\{0^\circ\}$-$\Kl_{\theta}(P)$ is maximum/minimum can be computed in $O(n^2\alpha(n))$ time and $O(n\alpha(n))$ space.
\end{theorem}

\section{The rotated \boldmath$\{0^\circ,90^\circ\}$-$\Kl_{\theta}(P)$ of a simple polygon $P$}

We continue our study on the problem of computing the $\mathcal{O}$-$\Kl$ of a simple polygon $P$ considering the case when $\mathcal{O}$ is given by two perpendicular orientations which rotate simultaneously, for which we prove the results in the second row of Table~\ref{Table:ResultsSimple}. Notice that the two orientations do not need to be perpendicular for the proofs nor the algorithm in this section, because we are using Observation~\ref{obs:O-kernel}. Moreover, since the problem for a set $\mathcal{O}$ with $k$ orientations reduces to computing and maintaining the intersection of $k$ different kernels, the results in the third row of Table~\ref{Table:ResultsSimple} will follow as~well.

\subsection{The existence of the $\{0^\circ,90^\circ\}$-$\Kl_{\theta}(P)$}
\label{subsec:simple_non-empty_0-90}

Taking into account Observation~\ref{obs:O-kernel}, one can determine the $\{0^\circ,90^\circ\}$-$\Kl_{\theta}(P)$ by computing the intersection of the two kernels $\{0^\circ\}$-$\Kl_{\theta}(P)$ and $\{90^\circ\}$-$\Kl_{\theta}(P)$, respectively. Note that, in fact, the latter equals the $\{0^\circ\}$-$\Kl_{\theta+90^\circ}(P)$. In the following, the points $p_W(\theta)$ and $p_E(\theta)$ for the $\{90^\circ\}$-$\Kl_{\theta}(P)$ are analogous to the points $p_N(\theta)$ and $p_S(\theta)$ previously defined for the  $\{0^\circ\}$-$\Kl_{\theta}(P)$. Notice that $p_N(\theta+90^\circ)=p_W(\theta)$ and $p_S(\theta+90^\circ)=p_E(\theta)$,  recall Figure~\ref{Figure1}, and see Figure~\ref{Figure6}.
\begin{figure}[htb]
\centering
\centering
\subfloat{
\includegraphics[width=0.45\textwidth]{./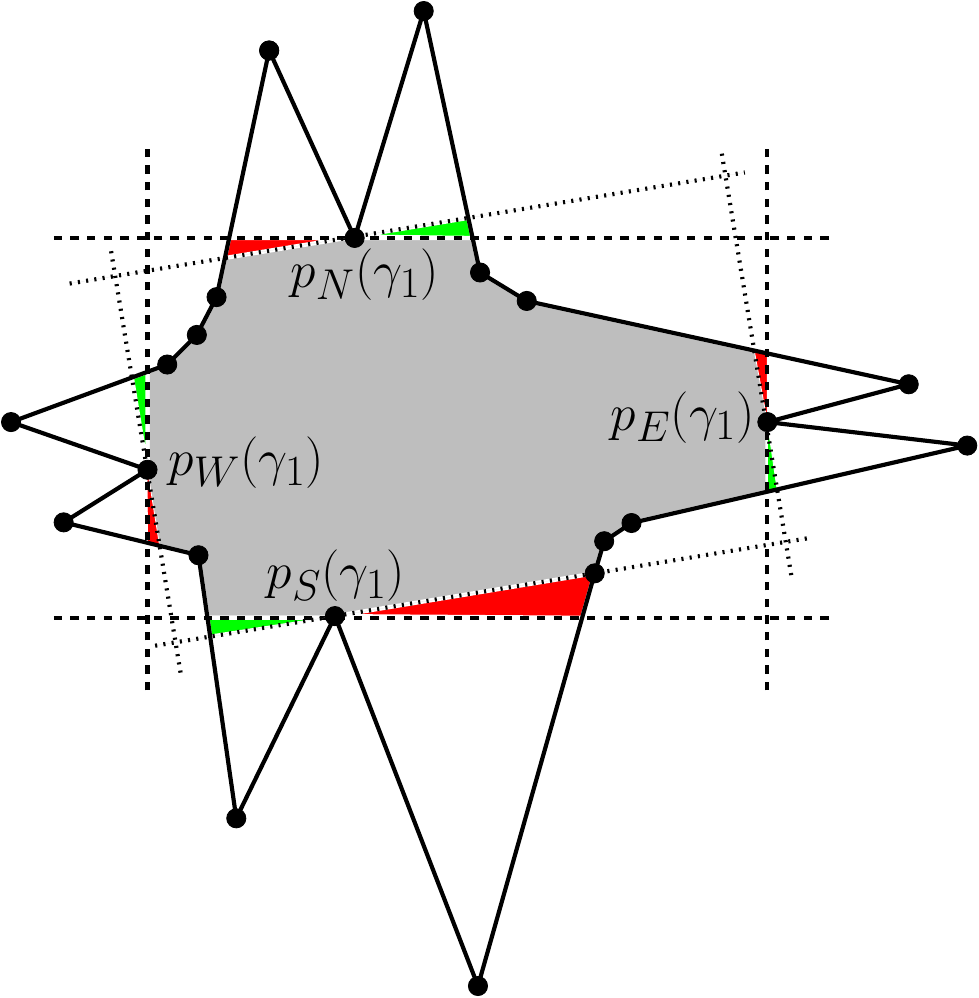}%
}
\qquad\qquad
\subfloat{
\includegraphics[width=0.4\textwidth]{./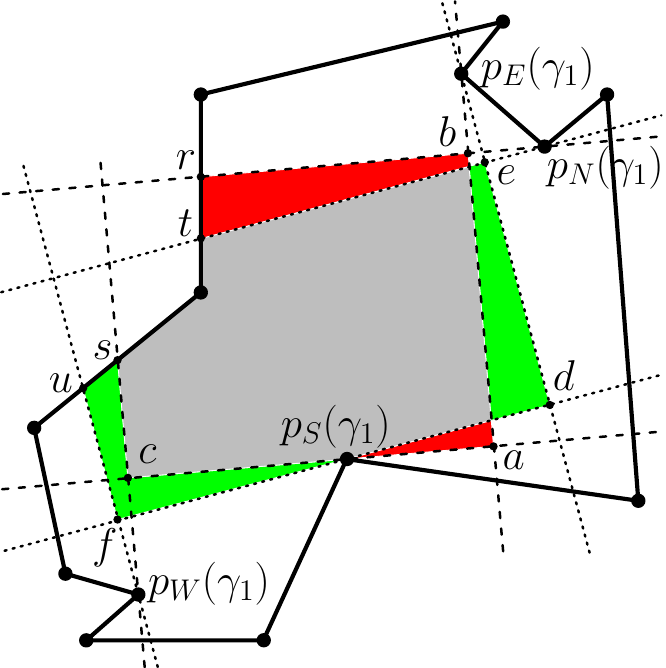}%
}
\caption{Left: A $\{0^\circ,90^\circ\}\textnormal{-}\Kl_{\theta}(P)$  and the rotated kernel in the next event, the area leaving (resp. entering) the kernel being depicted in red (resp. green). Right: A more general $\{0^\circ,90^\circ\}\textnormal{-}\Kl_{\theta}(P)$ and the rotated kernel in a slightly larger angle~$\beta$, depicting the entering and leaving areas as before. Note that, in both cases,  $p_i(\gamma_1)=p_i(\theta)$ for any $\theta \in [\gamma_1,\gamma_2)$, $i \in \{N,W,S,E\}$.
}
\label{Figure6}
\end{figure}

\subsubsection{Floating rectangle.}
Let $\theta \in [0,\pi/2)$ be an angle such that both $\{0^\circ\}$-$\Kl_{\theta}(P)$ and $\{90^\circ\}$-$\Kl_{\theta}(P)$ are non-empty. In what follows, we refer to the intersection $S_\theta(P) \cap S_{\theta+90^\circ}(P)$ as a {\em floating rectangle}, and denote it by $R_\theta$ (recall that $S_\alpha(P)$ denotes the strip defined by the lines $h_N(\alpha)$ and $h_S(\alpha)$ being, respectively, the line with slope~$\tan(\alpha)$ passing through $p_N(\alpha)$ and the line with slope $\tan(\alpha)$ passing through $p_S(\alpha)$). Clearly, by combining Lemma~\ref{lema1} with Observation~\ref{obs:O-kernel}, we observe that
\begin{equation} \label{eqn:rectangle}
\{0^\circ,90^\circ\}\textnormal{-}\Kl_{\theta}(P)=R_\theta \cap P,
\end{equation}
which immediately results in the following observation.

\begin{observation}\label{obs:empty_kernel_0_90}
The $\{0^\circ,90^\circ\}$-$\Kl_{\theta}(P)$ of a simple polygon $P$ is empty if, and only if:
\begin{itemize}
    \item[]
\begin{itemize}
    \item[{\em (A)}] either one of the two kernels $\{0^\circ\}$-$\Kl_{\theta}(P)$ or $\{90^\circ\}$-$\Kl_{\theta}(P)$ is empty, or
    \item[{\em (B)}] the floating rectangle $R_{\theta}$ is lying outside $P$ (as in Figure~\ref{Figure5}, left).
\end{itemize}
\end{itemize}
\end{observation}

\begin{figure}[htb]
\centering
\includegraphics[width=0.95\textwidth]{./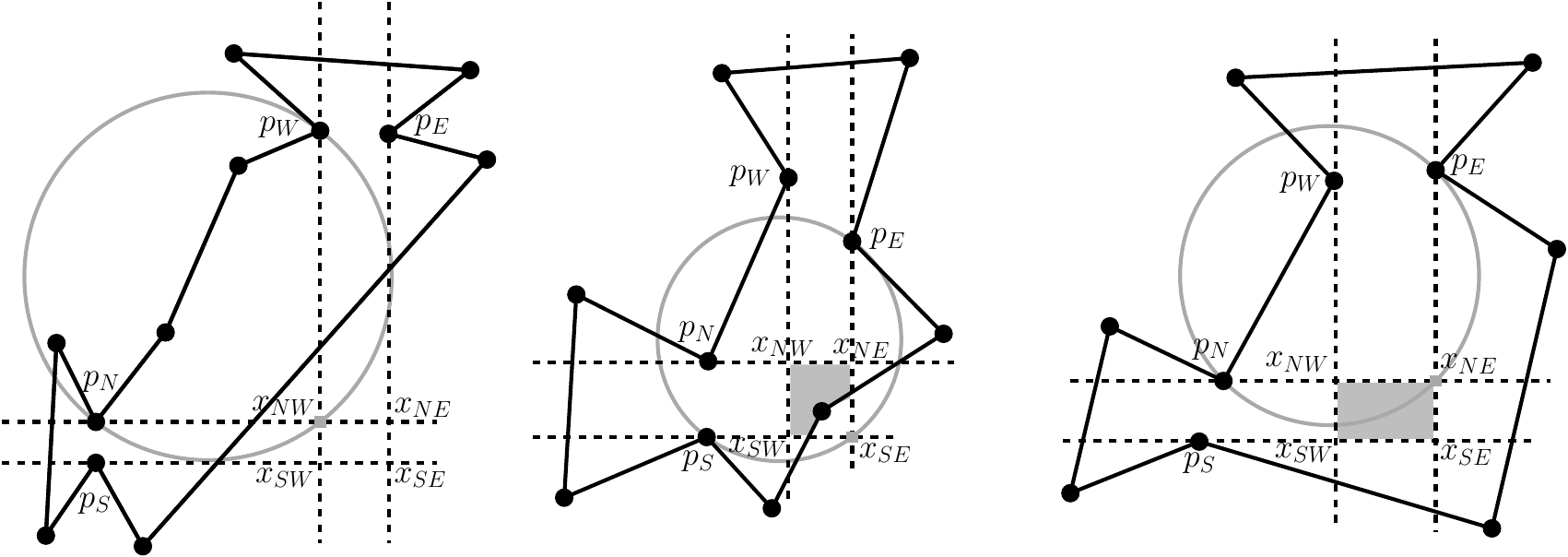}
\caption{Three types of kernel with the arcs for the vertices of the floating rectangle. Note that the references to the angle~$\theta$ in $p_i(\theta)$ and $x_{ij}(\theta)$ have been removed for the sake of an easier visualization.}\label{Figure5}
\end{figure}

Assume now that, following our approach proposed for the proof of Theorem~\ref{thm:0-kernel-exists-simple}, we have already computed the sequence ${\cal I}_{0^\circ}$ of event intervals where $\{0^\circ\}$-$\Kl_{\theta}(P)\not=\emptyset$, and in an analogous way the sequence ${\cal I}_{90^\circ}$ of event intervals where $\{90^\circ\}$-$\Kl_{\theta}(P)\not=\emptyset$. Now, in $O(n\alpha(n))$ time and space, we obtain from these two event sequences the sequence ${\cal I}$ (with complexity $O(n\alpha(n)$) of the event intervals corresponding to the simultaneous rotation of both kernels, saving only those non-empty intersections $I' \cap I''$ of event intervals $I' \in {\cal I}_{0^{\circ}}$ and $I'' \in {\cal I}_{90^{\circ}}$, where both kernels are non-empty. Once we have stored this data, as a matter of fact, we have handled Case (A) in Observation~\ref{obs:empty_kernel_0_90}.

Next, as regards Case (B) in Observation~\ref{obs:empty_kernel_0_90}, the following lemma allows us to check whether the intersection $\{0^\circ\}$-$\Kl_{\theta}(P)\cap \{90^\circ\}$-$\Kl_{\theta}(P)=\{0^\circ,90^\circ\}$-$\Kl_{\theta}(P)$ is non-empty.

\begin{lemma}\label{lemma:obs13}
Consider an event interval $[\gamma_1,\gamma_2) \in {\cal I}$ and an angle $\theta \in [\gamma_1,\gamma_2)$. Then the $\{0^\circ,90^\circ\}$-$\Kl_{\theta}(P)$ is non-empty in the following cases:
\begin{itemize}
\item[]
\begin{itemize}
    \item[{\em (B.1)}] At least one point among the current $p_N(\gamma_1)$, $p_S(\gamma_1)$, $p_E(\gamma_1)$, $p_W(\gamma_1)$ belongs to the floating rectangle~$R_{\gamma_1}$ $($see Figure~\ref{Figure6}$)$.
    \item[{\em (B.2)}] The polygon~$P$ contains at least one of the corners of the floating rectangle $R_\theta$ $($see Figure~\ref{Figure5}$)$.
\end{itemize}
\end{itemize}
\end{lemma}

\begin{proof}
First, if at least one of the cases (B.1), (B.2) holds then the $\{0^\circ,90^\circ\}$-$\Kl_{\gamma_1}(P)$ is non-empty. Assume now that the kernel is non-empty and suppose, for contradiction, that neither (B.1) nor (B.2) holds. Then, the fact that (B.2) does not hold implies that all $4$ corners of the rectangle~$R_{\gamma_1}$ lie outside $P$. Consider two adjacent corners $r,r'$ of $R_{\gamma_1}$ lying on the line~$h_N(\gamma_1)$ that goes through $p_N(\gamma_1)$. The fact that (B.1) does not hold implies that $p_N(\gamma_1)$ does not belong to the line segment connecting $r, r'$, But then, if there were a point~$q \in P$ on the segment~$r r'$, then the definition of the $\{0^\circ\}$-$\Kl_{\theta}(P)$ (see  Definition~\ref{def:o-kernel}) implies that $q$ should be $\gamma_1$-visible from $p_N(\gamma_1)$. Then, Definition~\ref{def:o-visible} implies that there is a $\gamma_1$-staircase~$C$ in~$P$ connecting $p_N(\gamma_1)$ and $q$; this is a contradiction because the intersection of $C$ with the line~$h_N(\gamma_1)$ which has slope $\tan(\gamma_1)$ is not connected. Thus, the entire edge~$r r'$ of $R_{\gamma_1}$ lies outside $P$.

Similarly, the other edges of $R_{\gamma_1}$ lie outside $P$ as well. Then, for any point~$q'$ inside $R_{\gamma_1}$, we can apply the same argument by using a line parallel to $h_N(\gamma_1)$ that goes through $q'$ (note that such a line intersects the strip~$S_{\gamma_1}$), proving that $q' \not\in P$. Therefore, the entire $R_{\gamma_1}$ lies outside $P$, in contradiction to the fact that the $\{0^\circ,90^\circ\}$-$\Kl_{\gamma_1}(P)$ is non-empty.
\end{proof}

Clearly, Case (B.1) can be checked in constant time, by the orientation test with the point considered and the two lines forming the relevant strip.
Notice that the situation of these four points cannot change during the event interval $[\gamma_1,\gamma_2)$, since $p_i(\gamma_1)=p_i(\theta)$ for any $\theta \in [\gamma_1,\gamma_2)$, $i \in \{N,W,S,E\}$.

\subsubsection{Arc events.}

For $i\in\{N,S\}$, let $x_{iW}(\theta)$ (resp.\ $x_{iE}(\theta)$) denote the intersection point of the line $h_i(\theta)$ with the line $h_N(\theta+90^\circ)$ (resp.\ $h_S(\theta+90^\circ)$), see Figure~\ref{Figure5}. In other words, the points $x_{ij}(\theta)$, $i\in\{N,S\}$ and $j\in\{W,E\}$,  are the relevant four corners of the floating rectangle $R_\theta$. Next, for $i\in\{N,S\}$ and $j\in\{W,E\}$, let $C_{ij}(\theta)$ denote the circle passing through the points $p_i(\theta),p_j(\theta)$ and $x_{ij}(\theta)$, again see Figure~\ref{Figure5}. Finally, let $\Breve{a}_{ij}(\theta)$ denote the arc of $C_{ij}(\theta)$ between $p_i(\theta)$ and~$p_j(\theta)$ such that $x_{ij}(\theta)$ belongs to $\Breve{a}_{ij}(\theta)$. Notice that the angle between points $p_i(\theta), x_{ij}(\theta)$ and $p_j(\theta)$ is the right angle, and so the point $x_{ij}(\theta)$ describes the semicircle having as diameter the segment $\overline{p_i(\theta)p_j(\theta)}$ (see again see Figure~\ref{Figure5}), thus impyling $C_{ij}(\theta)=C_{ij}(\gamma_1)$ and $\Breve{a}_{ij}(\theta)=\Breve{a}_{ij}(\gamma_1)$ for any $\theta \in [\gamma_1,\gamma_2)$.
Consequently, as $\theta$ varies in $[\gamma_1,\gamma_2)$, the point $x_{ij}(\theta)$
continuously moves along the arc $\Breve{a}_{ij}(\gamma_1)$. Moreover, we have the following observation.

\begin{observation}\label{obs:arc-event}
As $\theta$ varies in $[\gamma_1,\gamma_2)$, the point $x_{ij}(\theta)$
can change several times from the exterior to the interior of the polygon $P$ or vice versa.
\end{observation}

The claim follows from the interval $[\gamma_1,\gamma_2)$ being the intersection of event intervals and the fact that the boundary of the simple polygon $P$ can be a polyline of size $\Theta(n)$, as the one in Figure~\ref{Figure4}.
Taking into account Observation~\ref{obs:arc-event}, for an event interval $[\gamma_1,\gamma_2)$,
we can handle the case (B.2) in linear time. Because there are at most $O(n\alpha(n))$ event intervals, the total complexity for this step will be $O(n^2\alpha(n))$. Therefore, we can outline Algorithm~\ref{algo3}.

\begin{algorithm}[ht!]
\caption{\label{algo3}Computing the intervals of $\theta$ such that
$\{0^\circ,90^\circ\}$-$\Kl_{\theta}(P)\not=\emptyset$}
\ \\ \
\hglue 5pt \textbf{Input:} A simple polygon $P$ with $n$ vertices\\
\hglue 5pt \textbf{Output:} Sequence $\mathcal{E}$ of intervals for angles $\theta$ such that
$\{0^\circ,90^\circ\}$-$\Kl_{\theta}(P)\not=\emptyset$
\begin{algorithmic}[1]
    \Statex
    \Statex \hglue -6mm {\sc{\bf STEP I:} Event intervals: Checking Case (A) in Observation~\ref{obs:empty_kernel_0_90}}
    \State Apply Algorithm~\ref{algo} to compute the sequence ${\cal I}_{0^\circ}$ of event intervals where $\{0^\circ\}$-$\Kl_{\theta}(P)\not=\emptyset$
    \State Apply Algorithm~\ref{algo} to compute the sequence ${\cal I}_{90^\circ}$ of event intervals where $\{90^\circ\}$-$\Kl_{\theta}(P)\not=\emptyset$
    \State Combine ${\cal I}_{0^\circ}$ and ${\cal I}_{90^\circ}$ into the sequence ${\cal I} = \{I' \cap I'' = [\gamma_j,\gamma_{j+1})\; | \; I' \in {\cal I}_{0^{\circ}}, I'' \in {\cal I}_{90^{\circ}}\}$
    \Statex
    \Statex \hglue -6mm {\sc{\bf STEP II:}   Floating rectangle: Checking Case (B)}
    \For{each event interval $[\gamma_1,\gamma_{2})\in {\cal I}$ }
    \If{$p_N(\gamma_1)$ or $p_S(\gamma_1)$ or $p_E(\gamma_1)$ or $p_W(\gamma_1)$ belongs to $R_{\gamma_1}$ }
    \State Case (B.1) holds and $\{0^\circ,90^\circ\}$-$\Kl_{\theta}(P)\not=\emptyset$ for $\theta\in [\gamma_1,\gamma_{2})$. Insert $[\gamma_1,\gamma_{2})$ in an initially empty sequence $\mathcal{E}$
    %\EndIf
    \Else{
    \For{each vertex event $[\beta_1,\beta_{2})\subseteq [\gamma_1,\gamma_{2})$}
    \If{$x_{ij}(\theta)$, $i\in\{N,S\}, \, j\in\{W,E\}$, on $\Breve{a}_{ij}(\gamma_1)$ as $\theta\in [\beta_1,\beta_{2})$, is in the interior of $P$}
    \State Case (B.2) holds and $\{0^\circ,90^\circ\}$-$\Kl_{\theta}(P)\not=\emptyset$ for $\theta\in [\beta_1,\beta_{2})$. Insert $[\beta_1,\beta_{2})$ in $\mathcal{E}$
    \EndIf
    \EndFor
    }
    \EndIf
    \EndFor
\State {\bf output} $\mathcal{E}$
\end{algorithmic}
\end{algorithm}

\noindent\emph{Analysis of Algorithm~\ref{algo3}.} For STEP I we only need to apply twice Algorithm~\ref{algo}, and then do a refinement of two sequences of sizes $O(n\alpha(n))$, getting a sequence of size $O(n\alpha(n))$ in $O(n\log n)$ time and $O(n\alpha(n))$ space. STEP II has two
cases: Case~(B.1) takes only constant time to check whether some of the points belongs to the floating rectangle, and it is done $O(n\alpha(n))$ times, giving $O(n\alpha(n))$ total time complexity. Case ~(B.2) is also done $O(n\alpha(n))$ times but in each of them, we might have to check (in constant time) at most a linear number of vertex events or consecutive subintervals for each of the four vertices of the current floating rectangle. Therefore the total complexities of STEP II are $O(n^2\alpha(n))$ time and space. Notice that the space complexity of the algorithm is $O(n^2\alpha(n))$ because we are storing a sequence $\mathcal{E}$ of (possible) size $O(n^2\alpha(n))$.

\begin{theorem}\label{thm:0-90-kernel-exists-simple}
For a simple polygon $P$ with $n$ vertices, the sequence of consecutive intervals for the angles $\theta$ such that
$\{0^\circ,90^\circ\}$-$\Kl_{\theta}(P)\not=\emptyset$ can be computed in $O(n^2\alpha(n))$ time and space.
\end{theorem}

\begin{proof} The discussion above and the analysis of the complexities in Algorithm~\ref{algo3} provide the proof of this theorem.
\end{proof}

\subsubsection{Generalization to $k$ orientations}\label{subsec:k-orient}

One can extend Theorem~\ref{thm:0-90-kernel-exists-simple} to the case of a set $\mathcal{O}=\{\alpha_1,\dots,\alpha_k\}$ of $k$ orientations. In particular, Lemma~\ref{lemma:obs13} can be extended as follows. Instead of the four points $p_N(\gamma_1)$, $p_
S(\gamma_1)$, $p_E(\gamma_1)$, and $p_W(\gamma_1)$, we have $2k$ highest/lowest maximum/minimum reflex vertices according to the $k$ different orientations. The extended version of Condition (B.1) requires at least one of them to be inside the convex polygon defined by the intersection of the $k$ strips, what can be checked in $O(k)$ time and space, whereas Condition (B.2) holds if at least one vertex of this convex polygon is inside $P$, what can be checked in $O(kn^2\alpha (n))$ time and space.
Thus, we get the following result.

\begin{theorem}\label{theorem5}
For a simple polygon $P$ with $n$ vertices, the sequence of consecutive intervals for the angles $\theta$ such that $\{\alpha_1,\dots,\alpha_k\}$-$\Kl_{\theta}(P)\not=\emptyset$ can be computed in $O(kn^2\alpha(n))$ time and space. \end{theorem}

\subsection{Optimizing the area and perimeter of $\{0^\circ,90^\circ\}$-$\Kl_{\theta}(P)$ of simple polygons}
\label{subsec:simple_area-perimeter_0-90}

Let us consider a $\{0^\circ,90^\circ\}$-$\Kl_{\theta}(P)$ at some angle~$\theta = \gamma_1$ and suppose that the orientations are rotated to a slightly larger angle~$\beta$ so that the kernels at angle~$\gamma_1$ and $\beta$ are defined by the same reflex minima and maxima $p_i(\gamma_i)$, $i \in \{N,W,S,E\}$, and are bounded by the same edges of the polygon. The differential in the area of the kernels in the case shown in Figure~\ref{Figure6}, left, can be expressed in terms of $8$ triangles similar to the ones we saw for the $\{0^\circ\}$-$\Kl_{\theta}(P)$. The case we show in Figure~\ref{Figure6}, right, is more general and we have (for simplicity, we use here $p_i$ instead of $p_i(\gamma_i)$ for $i \in \{N,W,S,E\}$):
\begin{eqnarray} \label{eqn:0_90_area}
A(\beta) = A(\gamma_1)
& + & (A_T(p_S \,d \,p_E) - A_T(p_S \,a \,p_E))
+ (A_T(p_E \,e \,p_N) - A_T(p_E \,b \,p_N))\cr
& - & A_T(p_N \,r \,t)
+ A_T(p_E \,s \,u)
- (A_T(p_E \,f \,p_S) - A_T(p_E \,c \,p_S)),
\end{eqnarray}
where by $A_T(a \,b \,c)$ we denote the area of the triangle with vertices $a,b,c$. Thus, the differential in the area can be expressed using the area of at most $8$ triangles with 1 edge on a polygon edge and at most $4$ differences of two triangles with common base and whose third vertex moves along a circular arc.
The differential in the perimeter is (see Figure~\ref{Figure6}, right):
\begin{eqnarray} \label{eqn:0_90_perim}
\Pi(\beta) = \Pi(\gamma_1)
& + & (\Pi_T(p_S \,d \,p_E) - \Pi_T(p_S \,a \,p_E))
- (\Pi_T(p_E \,e \,p_N) - \Pi_T(p_E \,b \,p_N))\cr
& + & \Delta\Pi^-_T(p_N \,r \,t)
+ \Delta\Pi^+_T(p_E \,s \,u)
+ (\Pi_T(p_E \,f \,p_S) - \Pi_T(p_E \,c \,p_S))
\end{eqnarray}
where $\Pi_T(a \,b \,c)$ is the perimeter of the triangle with vertices $a,b,c$ and $\Delta\Pi^+_T(a \,b \,c)$
(resp. $\Delta\Pi^-_T(a \,b \,c)$) is the sum (resp. difference) of the difference of the lengths of the edges at angle~$\beta$ and $\gamma_1$ plus (resp. minus) the length of the third edge.

To compute and maintain the optimal values for the area and perimeter of the $\{0^\circ,90^\circ\}$-$\Kl_{\theta}(P)$, we can use the data computed in Section~\ref{subsec:simple_non-empty_0-90} about the intervals where this kernel is non-empty. Moreover, we can assume that in each of these intervals there are neither changes in the points of $P$ defining the kernel, nor changes in the vertices of the intersection rectangle of the two kernel strips. In particular, following Lemma~\ref{lemma:obs13} and Observation~\ref{obs:arc-event}, we compute different intervals for the cases when one, two, three, or the four vertices of the rectangle lie inside the kernel. This only implies a multiplicative constant factor in the number of event intervals. Thus, again a total of $O(n^2\alpha(n))$ intervals arise.

Next, we can analyze the method and formulas to compute the area or the perimeter according to the different types of intervals. We can always assume that we have computed the area or the perimeter of the previous interval, i.e., if we are going to analyze the interval $[\gamma_1,\gamma_2)$, then we know the values of the area and the perimeter for the previous interval $[\gamma'_1,\gamma'_2)$.

Thus, for the area or perimeter in Case (B.1) of Lemma~\ref{lemma:obs13}, if these four points are inside the kernel as illustrated in Figure~\ref{Figure6}, left, then we have to consider the $8$ triangles involved with the formulas for the area or perimeter, in an analogous way as for the case of one orientation $\{0^\circ\}$-$\Kl_{\theta}(P)$ in Subsections~\ref{subsec:opt_area_0-kernel} and~\ref{subsec:opt_peri_0-kernel}. If there are three, two, or only one of the points inside the kernel, it is enough to incorporate the corresponding new formulas for these cases. For the sake of easier reading, and since the complexity of the algorithm does not increase, the details for those cases are omitted.

An analogous situation arises for Case (B.2) of Lemma~\ref{lemma:obs13}: If all four rectangle corners are inside the polygon $P$, then it is easy to describe the formulas for the area and perimeter. We would have to add new formulas for the cases where there are three, two, or only one corner of the rectangle, but again the complexity of the algorithm does not change and details are omitted.

Thus, it is clear that the relevant issue for the algorithms optimizing area or perimeter is the total time for computing all of the $O(n^2\alpha(n))$ intervals (each one of them can be handled in constant time), which is $O(n^2\alpha(n))$. The space complexity is $O(n\alpha(n))$ because we only maintain the maximum/minimum values of the area or the perimeter but no all of the computed values, thus the used space is essentially for computing the set of event intervals. Notice that when we change from an event interval to the next event interval, we may have to manage a situation like the one illustrated in Figure~\ref{Figure4} from Proposition~\ref{propo1}, but this can be done in linear time and space since we translate one side of the kernel in parallel with a endpoint going through vertices on the boundary of $P$, and it does not change the total time and space complexities. Therefore, we have the following result.

\begin{theorem}\label{teorema1x}
For a simple polygon $P$ with $n$ vertices, an angle~$\theta$ such that the area or the perimeter of the $\{0^\circ,90^\circ\}$-$\Kl_{\theta}(P)$ are maximum/minimum can be computed in $O(n^2\alpha(n))$ time and $O(n\alpha(n))$ space.
\end{theorem}

\subsubsection{Generalization to $k$ orientations}

In a similar way as above, we can extend Theorem~\ref{teorema1x} to the case of a set $\mathcal{O}=\{\alpha_1,\dots,\alpha_k\}$ of $k$ orientations. Thus, we get the following result\footnote{
Actually, we can compute all angles $\theta$ maximizing/minimizing the area/perimeter in of $O(kn^2\alpha(n))$ space.
}.

\begin{theorem}\label{theorem7}
For a simple polygon $P$ with $n$ vertices, an angle~$\theta$ such that the area or the perimeter of
$\{\alpha_1,\dots,\alpha_k\}$-$\Kl_{\theta}(P)$ are maximum/minimum can be computed in $O(kn^2\alpha(n))$ time and $O(kn\alpha(n))$ space.
\end{theorem}

\section{Simple orthogonal polygons}\label{sec:ortho}

In this section, we confine our study to simple orthogonal polygons, showing how the results in Table~\ref{Table:ResultsSimple} can be improved to those in Table~\ref{Table:ResultsOrthogonal} for this case.

Each edge of an orthogonal polygon is a {\em {\rm N}-edge}, {\em {\rm S}-edge}, {\em {\rm E}-edge}, or {\em {\rm W}-edge}
depending on whether it bounds the polygon from the north, south, east, or west,
respectively. In particular, for $D \in \{{\rm N}, {\rm S}, {\rm E}, {\rm W}\}$, a \emph{$D$-dent} is a $D$-edge
whose both endpoints are reflex vertices of the polygon.
We call a sequence of alternating N- and E-edges
a {\em{\rm NE}-staircase\/}, and similarly we define
the {\em{\rm NW}-staircase}, {\em{\rm SE}-staircase}, and
{\em{\rm SW}-staircase}; clearly, each of these staircases is both $x$- and $y$-monotone.
Additionally, we characterize the reflex vertices of an orthogonal polygon based on the type of incident edges; more specifically, each reflex vertex incident to a N-edge and an E-edge is called a {\em{\rm NE}-reflex vertex\/}, and analogously we have the {\em{\rm NW}-, {\rm SE}-} and  {\em{\rm
SW}-reflex vertices}. See Figure~\ref{orthog_basic}, left.
The definition of reflex maxima/minima with respect to some orientation (Definition~\ref{defi2}) and the angles of the lines~$L$ such that both neighbors of a reflex vertex are both below or both above~$L$ imply the following observation.

\begin{observation} \label{obs:NW_SE_0_kernel} \rm
\begin{itemize}
\item [(i)]
For $\theta = 0$ (resp. $\theta = -\frac{\pi}{2}$), only the S- and N-dents (resp. W- and E-dents) contribute reflex minima and maxima, respectively.
\item [(ii)]
With respect to an orientation~$\theta \in (0, \frac{\pi}{2})$,
\emph{every} SE-reflex vertex of an orthogonal polygon is
a reflex maximum
and \emph{every} NW-reflex vertex is a reflex minimum,
whereas for $\theta \in (\frac{\pi}{2}, \pi)$,
\emph{every} SW-reflex vertex is a reflex maximum
and \emph{every} NE-reflex vertex is a reflex minimum.
\\
Analogously,
with respect to the orientation~$\theta+90^\circ$,
for $\theta \in (0, \frac{\pi}{2})$,
\emph{every} SW-reflex vertex is a reflex maximum
and \emph{every} NE-reflex vertex is a reflex minimum,
whereas for $\theta \in (\frac{\pi}{2}, \pi)$,
\emph{every} SE-reflex vertex is a reflex maximum
and \emph{every} NW-reflex vertex is a reflex minimum.
\end{itemize}
\end{observation}

\noindent
As not all SE-reflex and NW-reflex vertices are corners of dents, Observation~\ref{obs:NW_SE_0_kernel} implies that there may be a discontinuity in the  area or perimeter of the $\{0^\circ\}$-$\Kl_{\theta}(P)$ at $\theta = 0$ and $\theta = \frac{\pi}{2}$; these two cases need to be treated separately. Furthermore, it points out a crucial advantage of the orthogonal polygons over simple polygons stated in the following observation.

\begin{observation} \label{obs:ortho_advantage} \rm
In an orthogonal polygon~$P$, for any $\theta \in (0, \frac{\pi}{2})$ (and similarly for any $\theta \in (-\frac{\pi}{2}, 0)$), the~set of reflex minima/maxima {does not change}, and thus the lines bounding the strip~$S_{\theta}(P)$ rotate in a continuous fashion.
\end{observation}

This directly implies that a situation like the one depicted in Figure~\ref{Figure4} cannot occur.
Finally, statement~(ii) of Observation~\ref{obs:NW_SE_0_kernel} implies the following corollary.

\begin{corollary} \label{cor:max_above_min}
Let $P$ be a simple orthogonal polygon.
If there are a SE-reflex vertex $u = (x_u,y_u)$ and a NW-reflex vertex $v = (x_v,y_v)$ of $P$ such that $x_u \le x_v$ and $y_u \ge y_v$, then the $\{0^\circ\}$-$\Kl_{\theta}(P)$ is empty for each $\theta \in (0, \frac{\pi}{2})$.
Similarly, if there are a SW-reflex vertex $u' = (x_{u'},y_{u'})$ and a NE-reflex vertex $v' = (x_{v'},y_{v'})$ of~$P$ such that $x_{u'} \ge x_{v'}$ and $y_{u'} \ge y_{v'}$, then the $\{0^\circ\}$-$\Kl_{\theta}(P)$ is empty for each $\theta \in (\frac{\pi}{2}, \pi)$.
\end{corollary}
\begin{proof}
To see this, note that for any $u,v$ as in the statement of the corollary, for any $\theta \in (\frac{\pi}{2}, \pi)$, the line through $u$ at angle~$\theta$ is above the line through $v$ at angle~$\theta$; see Figure~\ref{orthog_basic}, right. Then, because $u$ and $v$ contribute a reflex maximum and a reflex minimum respectively (Observation~\ref{obs:NW_SE_0_kernel}(ii)), the strip~$S_{\theta}$ is empty and so is the $\{0^\circ\}$-$\Kl_{\theta}(P)$ by Lemma~\ref{rotatedLemma1}.
A similar argument works for the vertices $u',v'$.
\end{proof}

\begin{figure}[htb]
\centering
\subfloat{
\includegraphics[width=0.36\textwidth]{./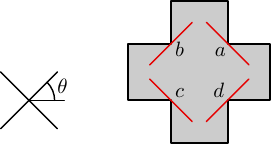}%
}
\qquad\qquad\qquad\qquad\qquad\qquad
\subfloat{
\includegraphics[width=0.2\textwidth]{./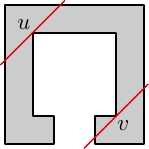}%
}
\caption{Left:~The NW-reflex vertex~$b$ and the SE-reflex vertex~$d$ are a reflex minimum and a reflex maximum with respect to the orientation~$\theta$, %$0^{\circ}_{\theta}$,
respectively, whereas the NE-reflex vertex~$a$ and the SW-reflex vertex~$c$ are a reflex minimum and a reflex maximum with respect to the orientation~${\theta}+\frac{\pi}{2}$, respectively. Right:~Illustration for
Corollary~\ref{cor:max_above_min}.}
\label{orthog_basic}
\end{figure}

\medskip\noindent
\textbf{Notation}. \ We denote by $\vartheta_P(a,b)$ the counterclockwise (CCW) boundary chain of polygon~$P$ from point~$a$ to point~$b$ where $a$ and $b$ are located on the boundary of $P$.

\subsection{The $\{0^\circ\}$-$\Kl_{\theta}(P)$ of simple orthogonal polygons}
\label{subsec:0_orthogonal_case}

We now prove the results in the first row of Table~\ref{Table:ResultsOrthogonal}, focusing on the case for $\theta \in [0, \frac{\pi}{2})$ since the case for $\theta \in [-\frac{\pi}{2}, 0)$ is similar.
Observation~\ref{obs:NW_SE_0_kernel} implies that for $\theta = 0$, the $\{0^\circ\}$-$\Kl_{\theta}(P)$, if non-empty, is determined by a lowest N-dent and a highest S-dent and that for $\theta \in (0, \frac{\pi}{2})$, only the SE-reflex (NW-reflex respectively) vertices contribute reflex maxima (resp. minima).

Let $P$ be a simple orthogonal polygon and suppose that there is at least one SE-reflex vertex in $P$. Let $u$ be the leftmost SE-reflex vertex of~$P$ (in case of ties, take the topmost such vertex), consider the downward-pointing ray~$\vec{r}$ emanating from $u$, and, among its intersections with S-edges of $P$ extending to the left of $\vec{r}$, let $s_{SE}$ be the closest one to $u$.
Similarly, let $u'$ be the topmost SE-reflex vertex of $P$ (in case of ties, take the leftmost such vertex) and let $t_{SE}$ be, among the points of intersection of the rightward-pointing~$\vec{r}\,'$ emanating from $u'$ with an E-edge extending above~$\vec{r}\,'$, the one closest to $u'$; see Figure~\ref{fig_0_chains}, left.

Next, let $C_{SE}$ be the upper hull of the CCW boundary chain~$\vartheta_P(s_{SE},t_{SE})$; the chain~$C_{SE}$ is the blue dashed line in Figure~\ref{fig_0_chains}, left. Similarly, by working with the NW-reflex vertices, we locate the (in case of ties, topmost) rightmost and the (in case of ties, leftmost) bottom-most NW-reflex vertices and we define the points $s_{NW}$ and $t_{NW}$, and the lower hull~$C_{NW}$ of the CCW boundary chain~$\vartheta_P(s_{NW},t_{NW})$.
The definition of the chain~$C_{SE}$ which states that $C_{SE}$ is the upper hull of $\vartheta_P(s_{SE},t_{SE})$ and implies that all the vertices of $C_{SE}$ except for $s_{SE},t_{SE}$ are SE-reflex vertices and the corresponding arguments for the chain~$C_{NW}$ imply the following lemma.

\begin{lemma} \label{lemma:chains}
Let $s_{SE}$, $t_{SE}$, $C_{SE}$, $s_{NW}$, $t_{NW}$, and $C_{NW}$ of a simple orthogonal polygon~$P$ be as defined earlier. If all SE-reflex vertices belong to the CCW boundary chain~$\vartheta_P(s_{SE},t_{SE})$ and all NW-reflex vertices belong to the CCW boundary chain~$\vartheta_P(s_{NW},t_{NW})$, then for any angle~$\theta \in (0, \frac{\pi}{2})$, any  vertex of $C_{SE}$ (resp. $C_{NW}$) at which a line at angle~$\theta$ is tangent to $C_{SE}$ (resp. $C_{NW}$) is a topmost reflex maximum (resp. lowest reflex minimum) with respect to the orientation at angle~$\theta$.
\end{lemma}

\begin{figure}[htb]
\centering
\subfloat{
\includegraphics[width=0.4\textwidth]{./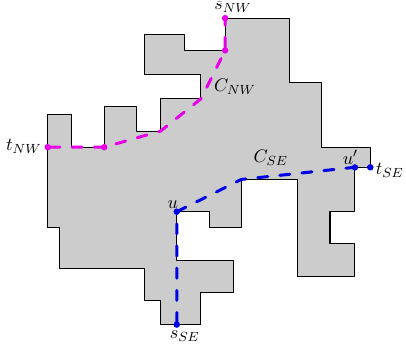}%
}
\hspace{1.5cm}
\subfloat{
\includegraphics[width=0.45\textwidth]{./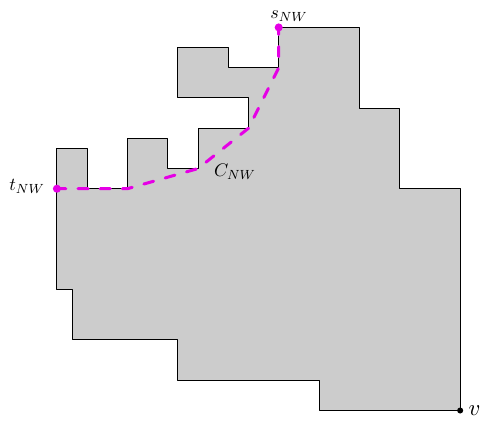}%
}
\caption{Left:~An orthogonal polygon and the corresponding convex chains $C_{SE}$ and $C_{NW}$. Right:~An orthogonal polygon without SE-reflex vertices in which we can consider that the convex chain~$C_{SE}$ degenerates into vertex~$v$.}
\label{fig_0_chains}
\end{figure}

Additionally, assuming that the CCW ordering of $s_{SE}$, $t_{SE}$, $s_{NW}$, and $t_{NW}$ around the boundary of~$P$ is precisely $s_{SE}, t_{SE}, s_{NW}, t_{NW}$, we can prove the following property of the CCW boundary chains of $P$ from $t_{NW}$ to $s_{SE}$ and from $t_{SE}$ to $s_{NW}$.

\begin{lemma} \label{lemma:staircases}
Let $s_{SE}$, $t_{SE}$, $s_{NW}$, and $t_{NW}$ of a simple polygon~$P$ be as defined earlier, and assume that the CCW ordering of $s_{SE}$, $t_{SE}$, $s_{NW}$, and $t_{NW}$ around the boundary of $P$ is precisely $s_{SE}, t_{SE}, s_{NW}, t_{NW}$ and that all SE-reflex vertices belong to the CCW boundary chain~$\vartheta_P(s_{SE},t_{SE})$ and all NW-reflex vertices belong to the CCW boundary chain~$\vartheta_P(s_{NW},t_{NW})$.
Then, the CCW boundary chain~$\vartheta_P(t_{NW},s_{SE})$ of $P$ from $t_{NW}$ to $s_{SE}$ is a SW-staircase and the CCW boundary chain~$\vartheta_P(t_{SE},s_{NW})$ from $t_{SE}$ to $s_{NW}$ is a NE-staircase.
\end{lemma}

\begin{proof}
Let us consider the case of the CCW boundary chain~$\vartheta_P(t_{NW},s_{SE})$ (see Figure~\ref{fig_0_chains}, left); the proof for the chain~$\vartheta_P(t_{SE},s_{NW})$ is symmetric. Since all SE-reflex vertices belong to the CCW boundary chain~$\vartheta_P(s_{SE},t_{SE})$ and all NW-reflex vertices belong to the CCW boundary chain~$\vartheta_P(s_{NW},t_{NW})$, the chain~$\vartheta_P(t_{NW},s_{SE})$ contains neither SE-reflex nor NW-reflex vertices.

Suppose that we start at the W-edge to which $t_{NW}$ belongs (let this edge be $u v$ with $v$ below $u$) and proceed in CCW order. The edge following the W-edge~$u v$ is not a N-edge, otherwise the vertex~$v$ would be a NW-reflex vertex, a contradiction. Thus, the edge following the W-edge~$u v$ is a S-edge, let it be $v w$. If $s_{SE} \in v w$, then we are done and the lemma holds. Otherwise, if the edge following the edge~$v w$ was an E-edge, then the top vertex of the leftmost edge in the CCW boundary chain~$\vartheta_P(w,s_{SE})$ would be a SE-reflex vertex (note that the E-edge incident on $w$ belongs to this chain), a contradiction. Therefore, the edge following the S-edge~$v w$ is a W-edge. Then, the above argument can be repeated until we reach the point~$s_{SE}$, implying that the CCW boundary chain~$\vartheta_P(t_{NW},s_{SE})$ is a NW-staircase.
\end{proof}

Lemma~\ref{lemma:staircases} implies that if the given polygon~$P$ has no SE-reflex vertices, then the CCW boundary chain~$\vartheta_P(t_{NW},s_{NW})$ consists of a SW-staircase followed by a NE-staircase; see Figure~\ref{fig_0_chains}, right.
A similar result holds if there are no NW-reflex vertices.

\subsubsection{The existence of the $\{0^\circ\}$-$\Kl_{\theta}(P)$ for a simple orthogonal polygon~$P$}
\label{sec:0-kernel-ortho-polygons}

In this subsection, we give an algorithm to determine when the $\{0^\circ\}$-$\Kl_{\theta}(P)$ for a simple orthogonal polygon~$P$ is non-empty. First, if no SE-reflex vertex exists, then no S-dents exist and as mentioned, the chain~$C_{SE}$ degenerates into the rightmost lowest vertex (see Figure~\ref{fig_0_chains}, right) which thus belongs to the $\{0^\circ\}$-$\Kl_{\theta}(P)$ for all $\theta \in (0, \frac{\pi}{2})$; thus, the $\{0^\circ\}$-$\Kl_{\theta}(P)$ is non-empty for all $\theta \in [0, \frac{\pi}{2})$. A similar argument holds if no NW-reflex vertex exists. So, in the following, we assume that the polygon~$P$ has SE-reflex and NW-reflex vertices. Then, we show the following lemma.

\begin{lemma} \label{lemma:empty_orthog_0_kernel}
Let $s_{SE}$, $t_{SE}$, $C_{SE}$, $s_{NW}$, $t_{NW}$, and $C_{NW}$ of a simple orthogonal polygon~$P$ be as defined earlier.
\begin{itemize}
\item[(i)]
Let $Q_{SE}$ be the convex part of the plane bounded from the left and above by $C_{SE}$, the downward-pointing ray emanating from $s_{SE}$, and the rightward-pointing ray emanating from $t_{SE}$. Similarly, let $Q_{NW}$ be the convex part of the plane bounded from the right and below by $C_{NW}$, the upward-pointing ray emanating from $s_{NW}$, and the leftward-pointing ray emanating from $t_{NW}$.
\begin{itemize}
  \item[{\em (a)}]
  If the interiors of $Q_{SE}$ and $Q_{NW}$ intersect, then the $\{0^{\circ}\}$-$\Kl_{\theta}(P)$ is empty for each $\theta \in (0, \frac{\pi}{2})$.
  \item[{\em (b)}]
  If the interiors of $Q_{SE}$ and $Q_{NW}$ do not intersect but $Q_{SE}$ and $Q_{NW}$ touch at a common point~$z$, then the $\{0^{\circ}\}$-$\Kl_{\theta}(P)$ degenerates to a line segment for each $\theta$ equal to the angle
  of each common interior tangent of $C_{SE}$ and $C_{NW}$ at $z$, and is empty for all other values of $\theta$.
\end{itemize}
\item[(ii)]
If there exists a SE-reflex vertex not belonging to the CCW boundary chain~$\vartheta_P(s_{SE},t_{SE})$ or a NW-reflex vertex not belonging to the CCW boundary
chain~$\vartheta_P(s_{SE},t_{SE})$,
then the $\{0^{\circ}\}$-$\Kl_{\theta}(P)$ is empty for each $\theta \in (0, \frac{\pi}{2})$.
\end{itemize}
\end{lemma}

\begin{proof}
(i.a)~Let $p$ be a point in the intersection of the interiors of the unbounded convex polygons $Q_{SE}$ and~$Q_{NW}$. Then, for any angle $\theta \in [0, \frac{\pi}{2})$,
$p$ lies below the tangent to~$C_{SE}$ at angle $\theta$ and above the tangent to $C_{NW}$ at angle $\theta$ and thus the strip~$S_{\theta}(P)$ is empty. Therefore, for each $\theta \in [0, \frac{\pi}{2})$, the strip~$S_{\theta}$ is empty and, by Lemma~\ref{rotatedLemma1}, so is the $\{0^\circ\}$-$\Kl_{\theta}(P)$.

\begin{figure}[htb]
\centering

\includegraphics[width=0.5
\textwidth]{./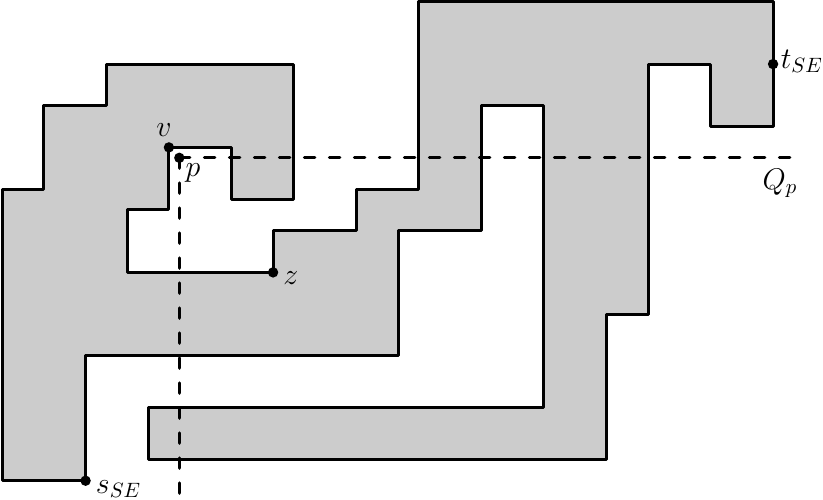}

\caption{The quadrant~$Q_p$ in the proof of Lemma~\ref{lemma:empty_orthog_0_kernel}(ii).}
\label{fig_empty_orthog_O_kernel}
\end{figure}

\noindent
(i.b)~If $Q_{SE}$ and $Q_{NW}$ touch along their horizontal rays, then the $\{0^\circ\}$-$\Kl_{\theta}(P)$ is a horizontal line segment if $\theta = 0$, otherwise it is empty. Similarly, if they touch along their vertical rays, then the $\{0^\circ\}$-$\Kl_{\theta}(P)$ is a vertical line segment if $\theta = \frac{\pi}{2}$, otherwise it is empty.
Next, assume that $Q_{SE}$ and $Q_{NW}$ touch at a point of $C_{SE}$ and $C_{NW}$. Then, because $Q_{SE}$ and $Q_{NW}$ are convex, they touch at a connected portion of $C_{SE}$ and $C_{NW}$, that is, they touch at a point or a line segment.
In either case, for any angle~$\theta$ of
any common interior tangent to $C_{SE}$ and $C_{NW}$, the $\{0^\circ\}$-$\Kl_{\theta}(P)$ is a line segment, otherwise it is empty.

\medskip\noindent
(ii)~Let us concentrate on the case of a SE-reflex vertex of $P$, say $v$, not belonging to the CCW boundary chain $\theta_P(s_{SE},t_{SE})$.
(The case of a NW-reflex vertex not belonging to the CCW boundary chain~$\vartheta_P(s_{NW},t_{NW})$ is similar.)
Let $p$ be a point infinitesimally to the right and below $v$ so that $p$ is outside $P$. Since the chain~$\vartheta_P(s_{SE},t_{SE})$ is determined by the leftmost and the topmost SE-reflex vertices, the $x$-coordinate of~$v$ is larger than the $x$-coordinate of $s_{SE}$ while
the $y$-coordinate of~$v$ is smaller than the $y$-coordinate of $t_{SE}$. This implies that $\vartheta_P(s_{SE},t_{SE})$ intersects both the rightward-pointing horizontal ray~$\vec{r}_\rightarrow$ emanating from $p$ and the downward-pointing vertical ray~$\vec{r}_\downarrow$ emanating from $p$. Let $Q_p$ be the closed quadrant delimited by the rays $\vec{r}_\rightarrow$ and $\vec{r}_\downarrow$ (see Figure~\ref{fig_empty_orthog_O_kernel}). Consider the set~$A_C$ of all minimal boundary chains of $P$ that lie in $Q_P$ and are delimited by a point on $\vec{r}_\rightarrow$ and a point on $\vec{r}_\downarrow$ (the minimality implies that no point in such a chain other than its endpoints belongs to either $\vec{r}_\rightarrow$ or $\vec{r}_\downarrow$); these chains do not intersect, therefore they are totally ordered and the ordering is the same as the ordering of their endpoints on $\vec{r}_\rightarrow$ or $\vec{r}_\downarrow$. Among the chains in $A_C$, let $C$ be the chain with endpoint on $\vec{r}_\rightarrow$ closest to $p$. Since $p$ is outside the polygon~$P$, then the interior of $P$ is to the left of $C$ as we walk along it from its endpoint on $\vec{r}_\rightarrow$ to its endpoint on $\vec{r}_\downarrow$; see Figure~\ref{fig_empty_orthog_O_kernel}. Then, the right vertex of a lowest horizontal edge in $C$ is a NW-reflex vertex (e.g., vertex~$z$ in Figure~\ref{fig_empty_orthog_O_kernel}). Since this vertex is below and to the right of the SE-reflex vertex~$v$, Corollary~\ref{cor:max_above_min} implies that the $\{0^\circ\}$-$\Kl_{\theta}(P)$ is empty for each $\theta \in (0, \frac{\pi}{2})$.
\end{proof}

So, assume that none of the cases of Lemma~\ref{lemma:empty_orthog_0_kernel} holds. Then, the chains $C_{SE}$ and $C_{NW}$ do neither intersect nor touch, and the inner common tangents to $C_{SE}$ and $C_{NW}$ are well defined; let them be $T_1$ and $T_2$ with the slope of $T_1$ being smaller than the slope of $T_2$, and let $\phi_1, \phi_2$ be the CCW angle with respect to the positive $x$-axis of $T_1$ and $T_2$, respectively. If the $y$-coordinate of $t_{NW}$ is greater than the $y$-coordinate of $t_{SE}$, then we set $\theta_{min} = 0$, otherwise $\theta_{min} = \phi_1$. Similarly, we define $\theta_{max}$ to be equal to~$\frac{\pi}{2}$ if the $x$-coordinate of $s_{SE}$ is greater than the $x$-coordinate of $s_{NW}$, otherwise $\theta_{max} = \phi_2$. For example, in Figure~\ref{fig_0_chains}, left, $\theta_{min} = 0$ and $\theta_{max} < \frac{\pi}{2}$.
Then, since for $\theta \in (0, \frac{\pi}{2})$ the strip~$S_{\theta}(P)$ is {non-empty} if and only if $\theta \in [\theta_{min}, \theta_{max}] \cap (0,\frac{\pi}{2})$, by Lemma~\ref{lema1} we have:

\begin{lemma} \label{lemma:range_orthog_0_kernel}
Let $P$ be a simple orthogonal polygon such that none of the cases of Lemma~\ref{lemma:empty_orthog_0_kernel} hold and consider that $\theta \in (0, \frac{\pi}{2})$. Then, if no SE-reflex or NW-reflex vertices exist, the $\{0^{\circ}\}$-$\Kl_{\theta}(P)$ is {non-empty} for each $\theta \in (0, \frac{\pi}{2})$, otherwise the $\{0^{\circ}\}$-$\Kl_{\theta}(P)$ is {non-empty} if and only if $\theta \in [\theta_{min}, \theta_{max}] \cap (0,\frac{\pi}{2})$.
\end{lemma}

\begin{corollary}\label{corol:orthog_0_angle_size}
The values of the angle~$\theta \in [0, \frac{\pi}{2})$ for which the $\{0^\circ\}$-$\Kl_{\theta}(P)$ of a simple orthogonal polygon~$P$ is non-empty form a single interval and potentially the value $\theta = 0$.
\end{corollary}

Based on the above discussion, we outline our algorithm in Algorithm~\ref{ortho_algo1}.

\begin{algorithm}[ht!]
\caption{\label{ortho_algo1}Computing the intervals of $\theta$ such that $\{0^\circ\}$-$\Kl_{\theta}(P) \ne \emptyset$ for a simple orthogonal polygon~$P$}
\ \\ \
\hglue 5pt \textbf{Input:} A simple orthogonal polygon $P$ with $n$ vertices\\
\hglue 5pt \textbf{Output:} The intervals of the angle~$\theta$ such that $\{0^\circ\}$-$\Kl_{\theta}(P) \ne \emptyset$
\begin{algorithmic}[1]
    \Statex
    \Statex \hglue -6mm {\sc{\bf STEP 1:}  Check if $\{0^\circ\}$-$\Kl_{\theta}(P) \ne \emptyset$ for $\theta = 0$}
    \If {no SE-reflex vertices exist or no NW-reflex vertices exist}
    \State \textbf{output} $[0,\frac{\pi}{2})$ and \textbf{stop}
    \EndIf
    \State compute the N- and S-dents of $P$ and let $e_N$ and $e_S$ be a lowest N-dent and a highest S-dent, respectively
    \If {the $y$-coordinate of $e_N$ is smaller than the $y$-coordinate of $e_S$}
    \State $solution\_for\_0 \gets \emptyset$
    \Else
    \State $solution\_for\_0 \gets [0,0]$
    \EndIf

    \Statex
    \Statex \hglue -6mm {\sc{\bf STEP 2:}  Check if $\{0^\circ\}$-$\Kl_{\theta}(P) \ne \emptyset$ for $\theta \in (0,\frac{\pi}{2})$}
    \State compute the points $s_{SE}, t_{SE}, s_{NW}, t_{NW}$ and the convex chains $C_{SE}$ and~$C_{NW}$
    \State check whether $C_{SE}$ and $C_{NW}$ cross, touch, or do not intersect
    \If {there exists a SE-reflex vertex not in $\vartheta_P(s_{SE},t_{SE})$ \,or\, a NW-reflex vertex not in $\vartheta_P(s_{NW},t_{NW})$ \,or\, the chains $C_{SE}$ and $C_{NW}$ cross}
    \State $solution\_for\_0$-$\frac{\pi}{2} \gets \emptyset$
    \ElsIf {the chains $C_{SE}$ and $C_{NW}$ share a line segment~$I$}
      \State $solution\_for\_0$-$\frac{\pi}{2} \gets [\theta_I,\theta_I] \cap (0,\frac{\pi}{2})$ where $\theta_I$ is the angle of the line supporting the line segment~$I$
      \ElsIf {the chains $C_{SE}$ and $C_{NW}$ touch at a single point~$z$}
        \State let $R_{SE}(z)$ ($R_{NW}(z)$, resp.) be the angle interval of the tangent to $C_{SE}$ (to $C_{NW}$ resp.) at $z$
        \State $solution\_for\_0$-$\frac{\pi}{2} \gets R_{SE}(z) \cap R_{NW}(z) \cap (0,\frac{\pi}{2})$
        \Else
          \State compute the inner common tangents to $C_{SE}$ and $C_{NW}$
          \State compute angles $\theta_{min}$ and $\theta_{max}$ as explained in the paragraph preceding Lemma~\ref{lemma:range_orthog_0_kernel}
          \State $solution\_for\_0$-$\frac{\pi}{2} \gets [\theta_{min}, \theta_{max}] \cap (0,\frac{\pi}{2})$
    \EndIf

    \Statex
    \Statex \hglue -6mm {\sc{\bf STEP 3:}  Output results}
    \State \textbf{output} $solution\_for\_0 \cup solution\_for\_0$-$\frac{\pi}{2}$
\end{algorithmic}
\end{algorithm}

\noindent\emph{Analysis of Algorithm~\ref{ortho_algo1}}. The correctness of Algorithm~\ref{ortho_algo1} follows from the fact that if no SE-reflex or no NW-reflex vertices exist, the $\{0^\circ\}$-$\Kl_{\theta}(P)$ is non-empty for all $\theta \in [0, \frac{\pi}{2})$, and from Observation~\ref{obs:NW_SE_0_kernel} and Lemmas \ref{lemma:empty_orthog_0_kernel} and \ref{lemma:range_orthog_0_kernel}.

Computing the SE-reflex and NW-reflex vertices,  the N- and S-dents, and then finding a lowest N-dent and a highest S-dent can be done in $O(n)$ time. Thus, STEP~1 can be completed in $O(n)$ time and $O(1)$ space.
Computing the points $s_{SE}, t_{SE}, s_{NW}, t_{NW}$ can be done in $O(n)$ time. The chains $C_{SE}$ and $C_{NW}$ can be computed in $O(n)$ time as well \cite{M1987}. As the size of $C_{SE}$ and $C_{NW}$ is $O(n)$ and they are $x$-monotone, we can check whether they cross or touch in $O(n)$ time by walking along them from their leftmost to their rightmost endpoint in lockstep fashion. Computing the angle of the line supporting the segment~$I$ and the angle ranges of the tangents at~$z$ can be done in $O(1)$ time. The inner common tangents to $C_{SE}$ and $C_{NW}$ can be computed in $O(\log n)$ time (in a fashion similar to computing the outer ones \cite{KS}), from which we can compute $\theta_{min}$ and $\theta_{max}$ in~$O(1)$ time. Hence, STEP~$2$ requires $O(n)$ time and $O(n)$ space. Finally, STEP~$3$ takes $O(1)$ time and space. In summary, we have:

\begin{theorem}\label{thm:0-kernel-exists-ortho}
For a simple orthogonal polygon~$P$ with $n$ vertices, the intervals of $\theta \in [-\frac{\pi}{2},\frac{\pi}{2})$ for which $\{0^\circ\}$-$\Kl_{\theta}(P) \ne \emptyset$ can be computed in $O(n)$ time and space.
\end{theorem}

\subsubsection{Optimizing the area and perimeter of the $\{0^\circ\}$-$\Kl_{\theta}(P)$ for a simple orthogonal polygon~$P$}
\label{subsec:opt_area_orthog_0-kernel}

In this subsection, we present an algorithm that computes an angle~$\theta \in [0, \frac{\pi}{2})$ such that the area (or perimeter)
of the $\{0^{\circ}\}$-$\Kl_{\theta}(P)$
is maximized; minimization works similarly.

If the $\{0^\circ\}$-$\Kl_{\theta}(P)$ for $\theta = 0$ is non-empty, we compute its area/perimeter and we use these to set the current maximum value and the current angle of the maximum that we maintain; if the $\{0^\circ\}$-$\Kl_{0}(P)$ is empty, then its area and perimeter are set to $0$. Next, we work for $\theta \in (0, \frac{\pi}{2})$. For the sake of generality, in the following, we consider that the polygon~$P$ has both SE-reflex and NW-reflex vertices; if one of these two vertex types is missing, then we skip the computations involving that vertex type, whereas if both vertex types are missing, then the kernel is the entire polygon~$P$ and we simply need to compute the area or perimeter of $P$.

\begin{figure}[htb]
\centering
\vspace{-1mm}
\subfloat{
\includegraphics[width=0.37\textwidth]{./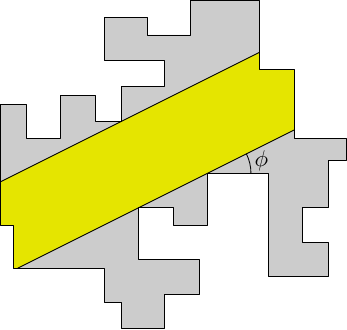}%
}
\hspace{1.5cm}
\subfloat{
\includegraphics[width=0.37\textwidth]{./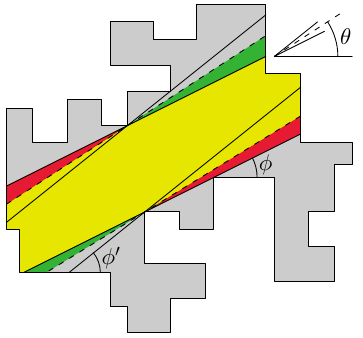}%
}
\vspace{-1.5mm}
\caption{Left:~$\{0^\circ\}$-$\Kl_{\theta}(P)$ for $\theta=\phi$. Right:~Optimizing the area/perimeter of the $\{0^\circ\}$-$\Kl_{\theta}(P)$ for~$\theta \in [\phi, \phi')$.}
\label{fig_1}
\end{figure}

Next, we check whether the conditions of Lemma~\ref{lemma:empty_orthog_0_kernel}
hold; if they do, the area of each of the degenerate kernels that arise is equal to $0$, whereas, whenever the kernel is non-empty, its perimeter can be computed in $O(1)$ time. Otherwise, we compute the interval $A = [\theta_{min},\theta_{max}] \cap [0,\frac{\pi}{2})$ as in Algorithm~\ref{ortho_algo1}; we need to maximize the area or perimeter of the $\{0^{\circ}\}$-$\Kl_{\theta}(P)$ for any angle~$\theta \in A$. We start at $\theta = \theta_{min}$ and we explicitly compute
the $\{0^{\circ}\}$-$\Kl_{\theta_{min}}(P)$ and its area (perimeter), which is the current area (perimeter) maximum (we note that if $\theta_{min} = 0$, we compute the area of the intersection of the polygon~$P$ with a horizontal strip defined by the highest SE-reflex vertex from below and the lowest NW-reflex vertex from above).
Subsequently, as in Section~\ref{sec:0-kernel-simple-polygons}, we partition the interval $A$ into angular subintervals, in each of which the following property holds:

\medskip\noindent
\textbf{Property 1}
The kernel involves the same topmost reflex maximum and lowest reflex minimum and the same edges of the polygon.

\medskip\noindent
For the resulting partition, say $P_A$, the following lemma holds.

\begin{lemma} \label{lemma:subinterval_number}
For an orthogonal polygon~$P$ with $n$ vertices, the size of the partition~$P_a$ of $A = [\theta_{min},\theta_{max}] \cap [0,\frac{\pi}{2})$ is $O(n)$.
\end{lemma}

\begin{proof}
Let $P_{SE}$ be the partition of the interval $A = [\theta_{min},\theta_{max}] \cap [0,\frac{\pi}{2})$ based on which vertex of the chain~$C_{SE}$ is the current topmost reflex maximum and on which edges of the polygon bound the lower segment of the strip~$S_{\theta}$. Then, an angle~$\theta \in A$ is a partition point if it is
\begin{itemize}
\item the angle of an edge of the chain~$C_{SE}$ (see Lemma~\ref{lemma:chains} because at that angle the topmost reflex maximum changes, or
\item the angle of the tangent from a vertex of $P$ to the chain~$C_{SE}$ because at that point the lower segment of the strip~$S_{\theta}$ moves to another edge.
\end{itemize}
Because the segments bounding the strip~$S_{\theta}$ rotate in a continuous fashion (Observation~\ref{obs:ortho_advantage}), the number of vertices of the chain~$C_{SE}$ and the polygon is $O(n)$, the size of $P_{SE}$ is $O(n)$. Similarly, the size of the corresponding partition~$P_{NW}$ related to the current lowest reflex minimum and the chain~$C_{NW}$ is $O(n)$ as well. Then, the partition~$P_A$ is the refinement of the partition~$P_{SE}$ by means of the partition~$P_{NW}$, which yields that its size is $O(n)$.
\end{proof}

After the partition~$P_A$ has been computed, we process the subintervals in increasing angle value and in each such interval $[\beta_j, \beta_{j+1})$, we maximize the area/perimeter as a function of an angle $\beta \in [\beta_j, \beta_{j+1})$ by taking into account the area/perimeter of $\{0^{\circ}\}$-$\Kl_{\beta_j}(P)$ and of the two green triangles and the two red triangles in the spirit of Equation \ref{eq:area}, as shown in Figure~\ref{fig_1}, right. The area (respectively perimeter) of each of these four triangles depends linearly on $\tan\beta$ and $\cot\beta$ (resp{.} linearly on $(1\pm\cos\beta)/\sin\beta$ and $(1\pm\sin\beta)/\cos\beta$), see the appendix.

Based on the above discussion, we outline our algorithm to maximize the area of $\{0^\circ\}$-$\Kl_{\theta}(P)$ in Algorithm~\ref{ortho_algo2}.

\begin{algorithm}[ht!]
\caption{\label{ortho_algo2}Computing the maximum area of $\{0^\circ\}$-$\Kl_{\theta}(P)$ for a simple orthogonal polygon~$P$
}
\ \\ \
\hglue 5pt \textbf{Input:} A simple orthogonal polygon $P$ with $n$ vertices\\
\hglue 5pt \textbf{Output:} A value of the angle~$\theta$ such that the area of $\{0^\circ\}$-$\Kl_{\theta}(P)$ is maximum
\begin{algorithmic}[1]
    \Statex
    \Statex \hglue -6mm {\sc{\bf STEP 1:}  Check special cases}
    \State execute Algorithm~\ref{ortho_algo1} to compute the set~$T$ of values of $\theta$ for which $\{0^\circ\}$-$\Kl_{\theta}(P) \ne \emptyset$, and $\theta_{min}, \theta_{max}$, if they can be defined
    \State $current\_angle \gets 0$
    \If {$0 \in T$}
    \State compute the area of $\{0^\circ\}$-$\Kl_{\theta}(P)$ for $\theta = 0$
    \State $current\_max \gets$ computed area
    \Else
    \State $current\_max \gets 0$
    \EndIf
    \If {any of the conditions of Lemma~\ref{lemma:empty_orthog_0_kernel} holds}
    \State \textbf{output} $current\_max$ \quad and \textbf{stop}
    \EndIf

    \Statex
    \Statex \hglue -6mm {\sc{\bf STEP 2:}  Maximize the area of $\{0^\circ\}$-$\Kl_{\theta}(P)$ for $\theta \in A = [\theta_{min},\theta_{max}] \cap [0,\frac{\pi}{2})$}
    \State compute the area of $\{0^\circ\}$-$\Kl_{\theta}(P)$ for $\theta = \theta_{min}$
    \State $\theta \gets \theta_{min}$
    \While {$\theta < \theta_{max}$}
    \State compute the angles for which the highest reflex maximum and the lowest reflex minimum change
    \State compute the angles for each of the segments bounding the strip~$S_{\theta}$ to reach the next vertex of the polygon~$P$
    \State $\delta \gets$ the minimum among the angles computed in the $2$ preceding lines
    \State maximize the area of the $\{0^\circ\}$-$\Kl_{\theta}(P)$ for $\theta \in [\theta, \theta+\delta)$ by using the expressions for the area in the appendix
    \State update, if needed, the current maximum area value $current\_max$ and the corresponding angle $current\_angle$
    \EndWhile
    \State \textbf{output} $current\_max$ and $current\_angle$
\end{algorithmic}
\end{algorithm}

\bigskip
\noindent\emph{Analysis of Algorithm~\ref{ortho_algo2}}. The correctness of Algorithm~\ref{ortho_algo2} follows from Observation~\ref{obs:NW_SE_0_kernel}, Lemmas \ref{lemma:staircases}, \ref{lemma:empty_orthog_0_kernel}, and~\ref{lemma:range_orthog_0_kernel}, and the preceding discussion.
Algorithm~\ref{ortho_algo1} requires $O(n)$ time and space, as do the computation of the $\{0^\circ\}$-$\Kl_{\theta}(P)$ for $\theta = 0$ and its area, and checking the conditions of Lemma~\ref{lemma:empty_orthog_0_kernel}. Thus, STEP~1 takes $O(n)$ time and space.
Computing the $\{0^\circ\}$-$\Kl_{\theta}(P)$ for $\theta = \theta_{min}$ can be explicitly done in $O(n)$ time and space. Each iteration of the \texttt{while} loop in STEP~2 takes $O(1)$ time as it involves accessing and processing in $O(1)$ time at most $8$ neighboring vertices and maximizing a constant-degree polynomial. Moreover, it is important to note that the subintervals processed in the \texttt{while} loop precisely form the partition~$P_A$. Since a different subinterval is processed in each iteration of the while loop and since the number of subintervals is $O(n)$
(Lemma~\ref{lemma:subinterval_number}), the execution of the \texttt{while} loop in STEP~2 takes $O(n)$ time. Hence,
by also taking into account that minimization, where meaningful, can be handled analogously, we get:

\begin{theorem}\label{thm:0-kernel-optareaperim-ortho}
For a simple orthogonal polygon $P$ with $n$ vertices, the values of $\theta$ such that the area or the perimeter of the $\{0^{\circ}\}$-$\Kl_{\theta}(P)$ are maximum/minimum can be computed in $O(n)$ time and space.
\end{theorem}

\subsection{The rotated $\{0^\circ,90^\circ\}$-$\Kl_{\theta}(P)$ of a simple orthogonal polygon~$P$}
\label{subsec:0_90_orthogonal_case}

We now extend our study to $\mathcal{O}=\{0^\circ, 90^\circ \}$ for a simple orthogonal polygon~$P$, proving the results in the second row of Table~\ref{Table:ResultsOrthogonal}. Observe that it suffices to consider $\theta \in [0,\frac{\pi}{2})$ since $\{0^\circ,90^\circ\}$-$\Kl_{0}(P)= \{0^\circ,90^\circ\}$-$\Kl_{\frac{\pi}{2}}(P)$. Again, Observation~\ref{obs:O-kernel} and Lemma~\ref{rotatedLemma1} imply that
\begin{equation}\label{eq:0_90_ortho}
\{0^\circ, 90^\circ\}\hbox{-}\Kl_{\theta}(P) \ =\  \{0^\circ\}\hbox{-}\Kl_{\theta}(P) \ \cap\  \{90^\circ\}\hbox{-}\Kl_{\theta}(P) \ =\  \left(\, S_{\theta} (P) \cap P \,\right) \ \cap\  \left(\, S_{\theta+90^\circ} (P) \cap P \,\right)
\end{equation}
and therefore $\{0^\circ, 90^\circ\}\hbox{-}\Kl_{\theta}(P) = S_{\theta} (P) \cap S_{\theta+90^\circ} (P) \cap P$, that is, the case is an extension of the $\{0^\circ\}$-$\Kl_{\theta}(P)$ with {two} strips $S_{\theta}(P)$ and $S_{\theta+90^\circ}(P)$, which are perpendicular to each other.

For $\theta = 0$, the $\{0^\circ, 90^\circ\}$-$\Kl_{\theta}(P)$ is the intersection of the polygon~$P$ with the horizontal strip determined above by the lowest N-dent and below by the topmost S-dent, and with the vertical strip determined to the left by the rightmost W-dent and to the right by the leftmost E-dent. Thus, the $\{0^\circ, 90^\circ\}$-$\Kl_{0}(P)$ may have reflex vertices (but no dents) at the top left, top right, bottom left or bottom right corners and is orthogonally convex.

Below we consider the case for $\theta \in (0,\frac{\pi}{2})$. In accordance with Observation~\ref{obs:NW_SE_0_kernel}, all reflex vertices are reflex maxima or minima with respect to one of the orientations in $\mathcal{O}_{\theta}$; then, the definition of the $\{0^\circ, 90^\circ\}$-$\Kl_{\theta}(P)$ implies that:

\begin{observation} \label{obs:convex_0_90_kernel} \rm
For $\theta \in (0, \frac{\pi}{2})$,
the $\{0^\circ, 90^\circ\}$-$\Kl_{\theta}(P)$ is a {convex} polygon.
\end{observation}

See Figure~\ref{fig_0_90_chains}, right for an example. Recall that for $\theta = 0$, the $\{0^\circ, 90^\circ\}$-$\Kl_{\theta}(P)$ is not necessarily convex, but it is orthogonally convex.

\begin{figure}[htb]

\centering
\vspace{-1mm}
\subfloat{
\includegraphics[width=0.29\textwidth]{./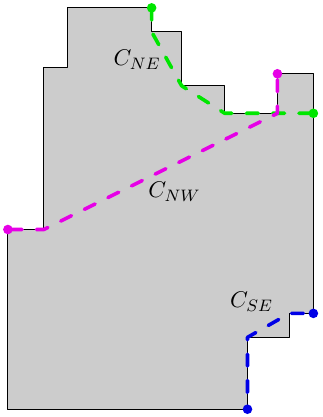}%
}
\hspace{2cm}
\subfloat{
\includegraphics[width=0.29\textwidth]{./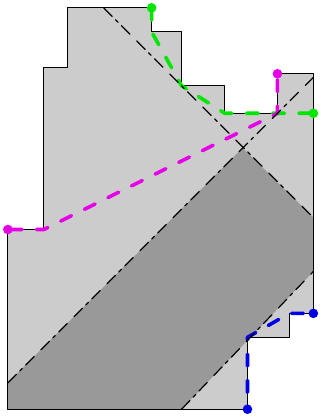}%
}
\vspace{1.5mm}
\caption{Left:~A simple orthogonal polygon~$P$ and the convex chains $C_{SE}, C_{NE}, C_{NW}$; no SW-reflex vertices exist. Right:~The $\{0^\circ,90^\circ\}$-$\Kl_{\theta}(P)$ for $\theta = \frac{\pi}{4}$ is shown darker.}
\label{fig_0_90_chains}
\end{figure}

As mentioned above, in this case, the kernel is in general defined by the two perpendicular strips $S_{\theta}(P)$ and $S_{\theta+90^\circ}(P)$. Let us investigate the cases that may arise for the points of intersection of the lines bounding these strips. So, consider an angle $\theta \in (0, \frac{\pi}{2})$ such that there is at least one reflex maximum in the orientation~$\theta$ and at least one reflex minimum in the orientation~$\theta+90^\circ$, and let $\ell_{{\theta},\downarrow}$ (resp{.} $\ell_{{90^{\circ}+\theta},\uparrow}$) be the bottom (resp{.} top) segment bounding $S_{\theta}(P)$ (resp{.} $S_{\theta+90^\circ}(P)$).
Moreover, let $p$ (resp{.} $q$) be the right endpoint of $\ell_{{\theta},\downarrow}$ (resp{.} $\ell_{{90^{\circ}+\theta},\uparrow}$).
Clearly $p$ belongs to
a N- or an E-edge, and similarly, $q$ belongs to
a S- or an E-edge. Each of the above possibilities for $p$ and $q$ may well arise if the segments $\ell_{{\theta},\downarrow}$ and $\ell_{{90^{\circ}+\theta},\uparrow}$ intersect (see Figure~\ref{fig_0_90_corners}, left); the point of intersection lies in the polygon~$P$ and, as the strips rotate, it moves along an arc of a circle whose diameter is the line segment connecting the reflex maximum and the reflex minimum about which $\ell_{{\theta},\downarrow}$ and $\ell_{{90^{\circ}+\theta},\uparrow}$, respectively, rotate.
However, if these two segments do not intersect, then only one case for the relative location of $p$ and $q$ is possible, as we show in the following lemma.

\begin{figure}[tb]

\centering
\vspace{-1mm}
\subfloat{
\includegraphics[width=0.20\textwidth]{./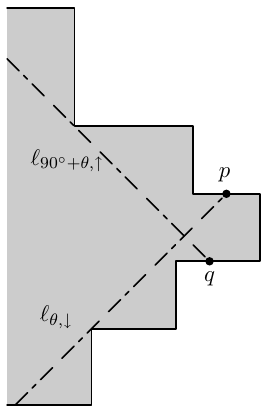}%
}
\hspace{1.5cm}
\subfloat{
\includegraphics[width=0.20\textwidth]{./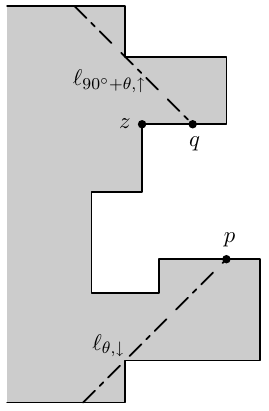}%
}
\hspace{1.5cm}
\subfloat{
\includegraphics[width=0.20\textwidth]{./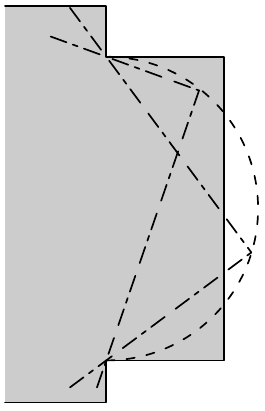}%
}\vspace{-1.5mm}
\caption{Left:~The segments $\ell_{{\theta},\downarrow}$ and $\ell_{{90^{\circ}+\theta},\uparrow}$ intersect.
Middle:~For Lemma~\ref{lemma:non-intersect_segm}; an impossible configuration.
Right:~As the angle~$\theta$ increases, the segments $\ell_{{\theta},\downarrow}$ and $\ell_{{90^{\circ}+\theta},\uparrow}$ intersect, later they stop doing so, and later they intersect again.}
\label{fig_0_90_corners}
\end{figure}

\begin{lemma} \label{lemma:non-intersect_segm}
Let $P$ be a simple orthogonal polygon and suppose that the conditions of Lemma~\ref{lemma:empty_orthog_0_kernel} hold neither for the SE-reflex and NW-reflex vertices, nor for the SW-reflex and the NE-reflex vertices.
Let segments $\ell_{{\theta},\downarrow}$ and $\ell_{{90^{\circ}+\theta},\uparrow}$ and points $p,q$ be defined as above.
If $\ell_{{\theta},\downarrow}$ and $\ell_{{90^{\circ}+\theta},\uparrow}$ do not intersect, then $p$ and $q$ belong to
the same E-edge of $P$.
\end{lemma}

\begin{proof}
The tangency of the segments $\ell_{{\theta},\downarrow}$ and $\ell_{{90^{\circ}+\theta},\uparrow}$ to the chains $C_{SE}$ and $C_{NE}$, respectively, implies that $p,q$ belong (in fact, in that order) to
the CCW boundary chain~$\vartheta_P(t_{SE},s_{NE})$ of $P$. Suppose, for contradiction, that the point~$p$ belongs to
a N-edge. Then, no matter whether $q$ belongs to
an E-edge or
a S-edge, the left vertex of the topmost edge of the CCW boundary chain from $p$ to $q$ is a SE-reflex vertex that is higher than $p$ and thus higher than $t_{SE}$ (see vertex~$z$ in Figure~\ref{fig_0_90_corners}, middle), in contradiction to the assumption that Lemma~\ref{lemma:empty_orthog_0_kernel}, statement~(ii), does not hold for the chain~$C_{SE}$. Thus, $p$ belongs to
an E-edge.
The exact same argument enables us to show that $q$ belongs to
an E-edge, and in fact that $p, q$ belong to
the same E-edge.
\end{proof}

Since a circular arc (the locus of the intersection points of the lines supporting the rotating segments $\ell_{{\theta},\downarrow}$ and $\ell_{{90^{\circ}+\theta},\uparrow}$) and a line segment (e.g., an E-edge) intersect in at most two points (see Figure~\ref{fig_0_90_corners}, right), the above lemma implies that we may need to consider at most $3$ angular subintervals for the at most $3$ different cases to consider for the pair of $\ell_{{\theta},\downarrow}$ and $\ell_{{90^{\circ}+\theta},\uparrow}$). As there are at most $4$ such pairs, we have:

\begin{observation}\label{obs:12_subintervals}
An angle interval satisfying Property~1 (Section~\ref{subsec:opt_area_orthog_0-kernel}) may need to be broken into at most $12$ sub-intervals.
\end{observation}

Additionally, Lemma~\ref{lemma:non-intersect_segm} readily implies that if the segments $\ell_{{\theta},\downarrow}$ and $\ell_{{90^{\circ}+\theta},\uparrow}$ do not intersect, then, in the boundary of the $\{0^\circ, 90^\circ \}$-$\Kl_{\theta}(P)$, $p$ and $q$ are connected by a part of an edge of $P$. Note that the kernel has one fewer edge if $\ell_{{\theta},\downarrow}$ and $\ell_{{90^{\circ}+\theta},\uparrow}$ intersect or if exactly one of these segments rotates around a degenerate chain, that is a point (see Figure~\ref{fig_0_90_chains}, right); similar results hold for the remaining $4$~pairs of ``consecutive" segments and more occurrences of the above cases result into a kernel of even fewer edges.
Therefore:

\begin{corollary}\label{corol:kernel_size}
For $\theta \in (0, \frac{\pi}{2})$, the rotated $\{0^\circ, 90^\circ\}$-$\Kl_{\theta}(P)$ of an orthogonal polygon~$P$ has at most $8$ edges.
\end{corollary}

\subsubsection{The existence of the $\{0^\circ,90^\circ\}$-$\Kl_{\theta}(P)$ of a simple orthogonal polygon~$P$}
\label{sec:0_90-kernel-ortho-polygons}

In this subsection, we give an algorithm to determine when the $\{0^\circ,90^\circ\}$-$\Kl_{\theta}(P)$ for a simple orthogonal polygon~$P$ is non-empty. The algorithm relies on the following lemma, which is an extension of Lemma~\ref{lemma:empty_orthog_0_kernel}.

\begin{lemma} \label{lemma:empty_orthog_0_90_kernel}
If the conditions of Lemma~\ref{lemma:empty_orthog_0_kernel}
hold for either the SE-reflex and NW-reflex vertices and the chains $C_{SE}$ and $C_{NW}$, or the SW-reflex and NE-reflex vertices and the chains $C_{SW}$ and $C_{NE}$, then the $\{0^\circ,90^\circ\}$-$\Kl_{\theta}(P)$ either is empty or degenerates to a point or a line segment.
\end{lemma}

If Lemma~\ref{lemma:empty_orthog_0_90_kernel} does not apply, then we work as in Section~\ref{subsec:simple_non-empty_0-90}, namely, we determine a sequence~$\mathcal{I}$ of angle intervals such that each interval in $\mathcal{I}$ satisfies Property~1 (Section~\ref{subsec:opt_area_orthog_0-kernel}).
Then, for each such event interval~$[\gamma,\gamma')$, we find the values of $\theta \in [\gamma,\gamma')$ such that at least one of the corners of the floating rectangle~$R_{\theta}$ lies in $P$. For a corner~$r$, this computation can be easily done by comparing the position of~$r$ with the position of the corresponding endpoint, say $s$, of the line segment that bounds the strip~$S_{\theta}(P)$ and whose supporting line defines $r$, that is by comparing the locus of the corner for $\theta \in [\gamma,\gamma')$ (which is a circular arc) with the edge on which $s$ lies; see Figure~\ref{fig_0_90_corners}, left and right.
Our algorithm is outlined in Algorithm~\ref{ortho_algo3}.

\begin{algorithm}[ht!]
\caption{\label{ortho_algo3}Computing  intervals of $\theta$ such that $\{0^\circ,90^\circ\}$-$\Kl_{\theta}(P) \ne \emptyset$ for a simple orthogonal polygon~$P$}
\ \\ \
\hglue 5pt \textbf{Input:} A simple orthogonal polygon $P$ with $n$ vertices\\
\hglue 5pt \textbf{Output:} Sequence $\mathcal{E}$ of intervals for angles $\theta$ such that
$\{0^\circ,90^\circ\}$-$\Kl_{\theta}(P)\not=\emptyset$
\begin{algorithmic}[1]
    \Statex
    \Statex \hglue -6mm {\sc{\bf STEP 1:} Case for $\theta = 0$ and apply Lemma~\ref{lemma:empty_orthog_0_90_kernel}}
    \State compute the $\{0^\circ,90^\circ\}$-$\Kl_{\theta}(P)$ for $\theta = 0$
    \If {$\{0^\circ,90^\circ\}$-$\Kl_{0}(P) \ne \emptyset$}
    \State $solution\_for\_0 \gets [0,0]$
    \EndIf
    \If {the conditions of Lemma~\ref{lemma:empty_orthog_0_90_kernel} hold}
    \State apply STEP~$2$ of Algorithm~\ref{ortho_algo1} to the $\{0^\circ\}$-$\Kl_{\theta}(P)$ or $\{90^\circ\}$ -$\Kl_{\theta}(P)$ that is degenerate, compute the values \phantom{i\quad}
    of the angle~$\theta$ (if any) for which it is non-empty, and compute the subset~$solution2$ of these values for which the \phantom{i\quad} intersection of these kernels is non-empty
    \State \textbf{output} $solution\_for\_0 \cup solution2$ \ and \textbf{stop}
    \EndIf
    \Statex
    \Statex \hglue -6mm {\sc{\bf STEP 2:} Compute event intervals}
    \State compute the sequence ${\cal I}_{0^\circ}$ of subintervals having Property~1 as in STEP~2 of Algorithm~\ref{ortho_algo2}
    \State compute the sequence ${\cal I}_{90^\circ}$ of subintervals having Property~1 as in STEP~2 of Algorithm~\ref{ortho_algo2}
    \State Refine ${\cal I}_{0^\circ}$ by using ${\cal I}_{90^\circ}$ into the sequence ${\cal I} = \{I' \cap I'' = [\gamma,\gamma')\; | \; I' \in {\cal I}_{0^{\circ}} \hbox{ and } I'' \in {\cal I}_{90^{\circ}}\}$
    \Statex
    \Statex \hglue -6mm {\sc{\bf STEP 3:}   Check corners of floating rectangle}
    \For{each angle interval $[\phi,\phi')\in {\cal I}$ }
    \State by determining the values of the angle~$\theta \in [\phi,\phi')$ for which at least one of the cases (B.1), (B.2) in Lemma~\ref{lemma:obs13} \phantom{xx} \phantom{i\quad} holds, find the values of the angle~$\theta$ for which the corner lies in $P$
    \State insert these values, if any, in an initially empty sequence~$\mathcal{E}$
    \EndFor
\State {\bf output} $\mathcal{E}$
\end{algorithmic}
\end{algorithm}

Computing the $\{0^\circ,90^\circ\}$-$\Kl_{\theta}(P)$ for $\theta = 0$ can be done in $O(n)$ and space; recall that it is defined by the  lowest N-dent, by the topmost S-dent, by the rightmost W-dent, and by the leftmost E-dent. $O(n)$ time is also need to check the conditions of Lemma~\ref{lemma:empty_orthog_0_90_kernel} and $O(n)$ time and space suffice to compute the $\{0^\circ,90^\circ\}$-$\Kl_{\theta}(P)$ if any one of these conditions holds. Thus, STEP~1 can be completed in $O(n)$ time ans space. STEP~2 of Algorithm~\ref{ortho_algo2} takes $O(n)$ time and space and hence, and so does the entire STEP~2 of Algorithm~\ref{ortho_algo3}; note that the refinement of two interval sequences of $O(n)$ size each (Lemma~\ref{lemma:subinterval_number}) can be done in $O(n)$ time and produces a sequence of $O(n)$ size.
In STEP~3, checking case~(B.1) in Observation~\ref{lemma:obs13} can be done in $O(1)$ time by locating each of the $p_N(~), p_S(~)$ against the strip~$S_{\theta+90^\circ}$ and each of the $p_E(~), p_W(~)$ against the strip~$S_{\theta}$. For case~(B.2), for each of the $4$ corners, we determine the values of $\theta$, for which the circular arc traced by the corner for $\theta \in [\delta,\delta')$ intersects any of the (at most $8$) edges of the polygon that delimit the segments bounding the strips $S_{\theta}$ and $S_{\theta+90^\circ}$ (see Figure~\ref{fig_0_90_corners}); then, by taking into account whether the corner at $\theta = \delta$ lies in $P$ or not, we can find the values of the angle~$\theta$ for which the corner lies in $P$. Then, for case~(B.2), the values of $\theta$ sought are precisely the union of the angle values computed for each corner of the rectangle~$R_{\theta}$; this takes $O(1)$ time as well. Moreover, since the sequence~${\mathcal I}$ is of $O(n)$ size and because of Observation~\ref{obs:12_subintervals}, we have:

\begin{observation}
\label{obs:final_interval_size}
The total number of subintervals in the sequence~${\mathcal E}$ is $O(n)$.
\end{observation}

Finally, since the \texttt{for}-loop in STEP~3 is repeated $O(n)$ times, STEP~3 is completed in $O(n)$ time and space. Thus:

\begin{theorem}\label{thm:0_90-kernel-exists-ortho}
For a simple orthogonal polygon~$P$ with $n$ vertices, the intervals of $\theta \in [0,\frac{\pi}{2})$ for which $\{0^\circ,90^\circ\}$-$\Kl_{\theta}(P) \ne \emptyset$ can be computed in $O(n)$ time and space.
\end{theorem}

\subsubsection{Optimizing the area and perimeter of the $\{0^\circ,90^\circ\}$-$\Kl_{\theta}(P)$ of a simple orthogonal polygon~$P$}
\label{subsec:opt_area_orthog_0_90-kernel}

Our algorithm for the problem of optimizing the area/perimeter of the $\{0^\circ,90^\circ\}$-$\Kl_{\theta}(P)$ for a simple orthogonal polygon~$P$ follows the steps of Algorithm~\ref{ortho_algo3}. It treats the case for $\theta = 0$ as a special case and computes its area or perimeter, which it uses to initialize the current maximum value. Next, it checks the conditions of Lemma~\ref{lemma:empty_orthog_0_90_kernel} and computes the values of area/perimeter in these degenerate cases (see Lemma~\ref{lemma:empty_orthog_0_90_kernel}). Subsequently, it performs STEP~2 of Algorithm~\ref{ortho_algo3} and proceeds to STEP~3, except that in each small angular interval for which at least one corner of the rectangle~$R_\theta$ lies in $P$, it works incrementally maximizing the area or perimeter as in the algorithm in Section~\ref{subsec:simple_area-perimeter_0-90}, which takes $O(1)$ time; the area (resp{.} perimeter) depends linearly on $\tan\beta$, $\cot\beta$, and $\sin\beta \,\cos\beta$ (resp{.} linearly on $(1 \pm \cos\beta)/\sin\beta$, $(1 \pm \sin\beta)/\cos\beta$, and $(\sin\beta + \cos\beta)$), see the appendix.
It is important to note that for each angle interval $[\gamma,\gamma')$ with $\gamma \ne 0$ and $\gamma' \ne \frac{\pi}{2}$ for which the $\{0^\circ,90^\circ\}$-$\Kl_{\theta}(P)$ is non-empty such that the kernel is empty for $\theta = \gamma - \varepsilon$ for a small enough $\varepsilon$, the kernel for $\theta = \gamma$ is degenerate, i.e., it is a point or a line segment, so that its area and perimeter can be computed in $O(1)$ time.

The above discussion and the fact that the algorithm for the $\{0^\circ, 90^\circ \}$-$\Kl_{\theta}(P)$ is very similar to that for the $\{0^\circ\}$-$\Kl_{\theta}(P)$ lead to the following result.

\begin{theorem}\label{thm:0-90-kernel-optareaperim-ortho}
Given a simple orthogonal polygon~$P$, computing the $\{0^{\circ},90^{\circ}\}$-$\Kl_{\theta}(P)$ as well as finding an angle~$\theta$ such that its area or perimeter is maximized or minimized can be done in $O(n)$ time and space.
\end{theorem}

\subsubsection{Generalization to $k$ orientations}

For a set $\mathcal{O}$ with $k$ orientations $\alpha_1,\dots,\alpha_k$, computing the intervals of the angle~$\theta$ such that $\{\alpha_1,\dots,\alpha_k\}$-$\Kl_{\theta}(P) \ne \emptyset$ or an angle~$\theta$ such that the area or the perimeter of this kernel is reduced to computing and maintaining the intersection of $P$ with $k$ different strips. As mentioned in Section~\ref{subsec:k-orient}, Lemma~\ref{lemma:obs13} appropriately extends and the incremental construction of the kernel involves work on $O(k)$ triangles. As a result, Theorems~\ref{thm:0_90-kernel-exists-ortho} and \ref{thm:0-90-kernel-optareaperim-ortho} extend to the following theorem that leads to the results in the third row of Table~\ref{Table:ResultsOrthogonal}.

\begin{theorem}\label{thm:ortho_k_orientations}
Given a simple orthogonal polygon~$P$ with $n$ vertices, the intervals of $\theta$ such that $\{\alpha_1,\dots,\alpha_k\}$-$\Kl_{\theta}(P) \ne \emptyset$  or an angle such that the area or perimeter of $\{\alpha_1,\dots,\alpha_k\}$-$\Kl_{\theta}(P)$ are maximum/minimum can be computed in $O(kn)$ time and space.
\end{theorem}

\section*{Acknowledgements}
Alejandra Mart\'{i}nez-Moraian was funded by the predoctoral contract PRE2018-085668 of the Spanish Ministry of
Science, Innovation, and Universities and was partially supported by project PID2019-104129GB-I00 / AEI / 10.13039/501100011033 of the Spanish Ministry of Science and Innovation. David Orden was supported by project PID2019-104129GB-I00 / AEI / 10.13039/501100011033 of the Spanish Ministry of Science and Innovation and by the European Union H2020-MSCA-RISE project 734922 - CONNECT. Carlos Seara was supported by projects MTM2015-63791-R MINECO/FEDER, project PID2019-104129GB-I00 / AEI / 10.13039/501100011033 of the Spanish Ministry of Science and Innovation, Gen.~Cat. DGR 2017SGR1640, and by the European Union H2020-MSCA-RISE project 734922 - CONNECT. Pawe\l{} \.Zyli\'{n}ski was supported by grant 2015/17/B/ST6/01887 (National Science Centre, Poland).

\newpage

\appendix
\section{Appendix}

\subsection{Trigonometric formulas for the area of the $\{0^{\circ}\}$-$\Kl_{\theta}(P)$ and the $\{0^{\circ},90^{\circ}\}$-$\Kl_{\theta}(P)$}
\label{appendix:area}

We consider angle $\beta \in [\theta_i, \theta_{i+1}] \subseteq (0,\frac{\pi}{2})$ and triangles with two edges on lines forming angles~$\beta,\theta_i$ with the positive $x$-axis. For the third edge, we distinguish two cases: it is horizontal or it is on a line forming angle~$\alpha$ with the positive $x$-axis (see
Figure~\ref{fig:area_perimeter}).

\begin{figure}[htb]
\centering
\vspace{-1mm}
\subfloat{
\includegraphics[width=0.33\textwidth]{./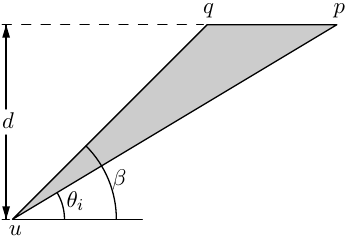}%
}
\hspace{1.5cm}
\subfloat{
\includegraphics[width=0.2\textwidth]{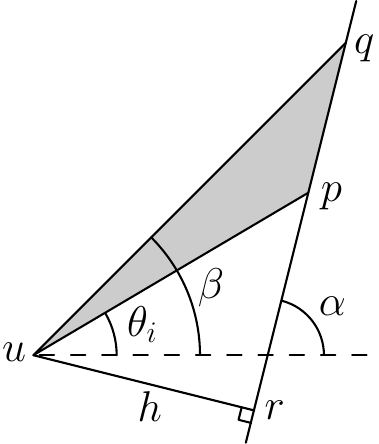}%
}
\vspace{-2mm}
\caption{For the formulas of the area and perimeter of the $\{0^{\circ}\}$-$\Kl_{\theta}(P)$.}
\label{fig:area_perimeter}
\end{figure}

From Figure~\ref{fig:area_perimeter}~(left), the area of a triangle~$T$ with edges at angles $0$ (horizontal edge), $\theta_i$, and $\beta$ ($0 < \theta_i \le \beta \le \theta_{i+1} < \frac{\pi}{2}$) is equal to
\begin{eqnarray}\label{eqn:area_h}
A_T \ =\  \frac{1}{2} \, d \  |\overline{pq}|
\ =\  \frac{1}{2} \, d \, (d\, \cot \theta_i - d\, \cot \beta)
 \ =\  \frac{1}{2} \, d^2 \, (\cot \theta_i - \frac{\cos \beta}{\sin\beta}).
\end{eqnarray}

Let us now consider a triangle~$T$ with edges at angles  $\theta_i$, $\beta$, and $\alpha$
($0 < \theta_i \le \beta < \theta_{i+1} \le \frac{\pi}{2}$) (see Figure~\ref{fig:area_perimeter}, right). Then,
$\widehat{u p r} = \alpha - \phi$ and
$\widehat{u q r} = \alpha - \theta_i$ which imply that
$|\overline{pq}| = |\overline{rq}| - |\overline{rp}| = h\, \tan (\frac{\pi}{2} - \alpha + \beta) - h\, \tan (\frac{\pi}{2} - \alpha + \theta_i)
= h\, \cot (\alpha - \beta) - h\, \cot (\alpha - \theta_i)$ where $h = |\overline{ur}|$ is the (perpendicular) distance of $u$ from the line through $p,q$. Then, the area of the triangle is equal to
\begin{eqnarray} \label{eqn:area_sl}
A_T & = & \frac{1}{2} h\, |\overline{pq}| \ =\  \frac{1}{2} h^2 \, \left( \cot (\alpha - \beta) - \cot (\alpha - \theta_i) \right) \ =\  \frac{1}{2} h^2 \, \left( \frac{1+ \cot \alpha \cot \beta}{\cot \beta - \cot \alpha} - \cot (\alpha - \theta_i) \right)\cr
\phantom{x} & = & \frac{1}{2} h^2 \, \left( \frac{\sin\beta + \cot \alpha \cos \beta}{\cos \beta - \cot \alpha \sin \beta} - \cot (\alpha - \theta_i) \right)
\end{eqnarray}

\medskip\noindent
\textbf{Expression of the area of the $\{0^\circ\}$-$\Kl_{\theta}(P)$.\ }
By using Equation~\ref{eq:area} and Equations \ref{eqn:area_h} and \ref{eqn:area_sl} and since $\theta_i$ is fixed, the area~$A(\beta)$ of the $\{0^\circ\}$-$\Kl_{\theta}(P)$
in terms of $\beta \in [\theta_i, \theta_{i+1})$ is
\[
A(\beta)= A(\theta_i) + (A_1(\beta) + A_2(\beta) - B_1(\beta) - B_2(\beta)) = \displaystyle A(\theta_i) + \sum_{i=1}^4
\left( \frac{C_i\sin\beta+D_i\cos\beta}{E_i\sin\beta+F_i\cos\beta}
+ G_i\right),
\]
where $A(\theta_i)$ is the known value of the current area, and $C_i$, $D_i$, $E_i$, $F_i$, and $G_i$ are all constants for every $i=1,\dots,4$.

Then, by setting the derivative equal to zero, we get
\[
A'(\beta) = \displaystyle\sum_{i=1}^4 \frac{C_iF_i-D_iE_i}{(E_i\sin\beta+F_i\cos\beta)^2}=0,
\]
implying that
\[
\displaystyle\sum_{i=1}^4\left[ (C_iF_i-D_iE_i) \prod_{\substack{{j=1}\\{j\neq i}}}^4 (E_j\sin\beta+F_j\cos\beta)^2 \right]=0.
\]
Expanding the product, we find three types of terms depending on $\sin^2\beta$, $\cos^2\beta$, and $\sin\beta\cos\beta$. Now using the trigonometric transformations
\[
\sin^2\beta = \frac{\tan^2\beta}{1+\tan^2\beta}, \quad
\cos^2\beta = \frac{1}{1+\tan^2\beta}, \quad \text{ and } \quad
\sin\beta\cos\beta = \frac{\tan\beta}{1+\tan^2\beta},
\]
and making the change $\tan\beta=t$ we get a rational function in $t$. Then, the derivative function for the area is now a function on the variable $t$, $A'(t)$, and it is a rational function having as numerator a polynomial in $t$ of degree $6$ and as denominator a polynomial of degree $12$. So we can compute the real solutions of a polynomial equation in $t$ of degree $6$.

\smallskip\noindent
\textbf{Orthogonal polygons:} For the case of the $\{0^{\circ}\}$-$\Kl_{\theta}(P)$ for orthogonal polygons $P$, the triangles~$A_i(\beta)$ have a horizontal or a vertical base. Since for $\alpha = \frac{\pi}{2}$, Equation~\ref{eqn:area_sl} becomes
\begin{eqnarray} \label{eqn:area:ortho}
A_T \ \ =\  \frac{1}{2} \, h^2 \, \left( \frac{\sin\beta}{\cos \beta} - \cot (0 - \theta_i) \right)
\ =\  \frac{1}{2} \, h^2 \, (\tan \beta - \tan \theta_i),
\end{eqnarray}
Equations \ref{eqn:area_h} and \ref{eqn:area:ortho} imply that in this case we have
\[
A(\beta) = A(\theta_i) + C \,\tan \beta + D \,\cot \beta + G
\]
for appropriate constants $C, D, G$.

\begin{figure}[htb]
\centering
\includegraphics[width=0.4\textwidth]{./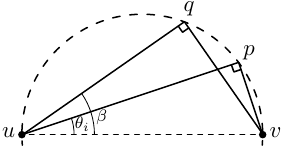}
\caption{For the formulas of the area and perimeter of the $\{0^{\circ}, 90^\circ\}$-$\Kl_{\theta}(P)$.}
\label{fig:0_90_kernel_formula}
\end{figure}

\medskip\noindent
\textbf{Expression of the area of the $\{0^\circ,90^\circ\}$-$\Kl_{\theta}(P)$.\ }
Consider the case in which the corner of the floating rectangle~$R_{\theta)}$ lies in $P$ (then it is a vertex of the kernel) and does so for all the angles $\beta \in [\theta_i,\theta_{i+1})$. Then, the corner moves along a circular arc with diameter the distance of the reflex minima/maxima that define the corner; see Figure~\ref{fig:0_90_kernel_formula}.
Then, the differential in the area is
\begin{eqnarray*}
\Delta A_T & = &  A_T({u \,v \,q}) - A_T({u \,v \,p})
\ =\  \frac{1}{2} \, |\overline{uq}| \, |\overline{vq}|
-  \frac{1}{2} \, |\overline{up}| \, |\overline{vp}|\cr
\phantom{x} & = & \frac{1}{2} \, (|\overline{uv}| \, \cos\beta) \, (|\overline{uv}| \, \sin\beta)
- \frac{1}{2} \, (|\overline{uv}| \, \cos\theta_i) \, (|\overline{uv}| \, \sin\theta_i)\cr
\phantom{x} & = & \frac{1}{2} \, |\overline{uv}|^2 \, \left( \sin\beta \, \cos\beta -  \sin\theta_i \, \cos\theta_i \right).
%%\phantom{x} & = & \frac{1}{4} \, |\overline{uv}|^2 \, (\sin 2\beta - \sin 2\theta_i).
\end{eqnarray*}
Thus, in this case, for simple polygons, the differential in the area involves at most $4$ terms, each being either
\[
\frac{C_i\sin\beta+D_i\cos\beta}{E_i\sin\beta+F_i\cos\beta}
\quad \hbox{ or } \
K_i \sin \beta \,\cos \beta.
\]

\noindent
\textbf{Orthogonal polygons:} For the case of orthogonal polygons, similarly we have at most $4$ terms, each being $C_i \tan \beta$, $D_i \cot \beta$, or $K_i \sin \beta \,\cos \beta$ and thus we have
 we have
\[
A(\beta) = A(\theta_i) + C \,\tan \beta + D \,\cot \beta + K \, \sin \beta \,\cos \beta + G
\]
for appropriate constants $C, D, K, G$.

\subsection{Trigonometric formulas for the perimeter of the $\{0^{\circ}\}$-$\Kl_{\theta}(P)$ and the $\{0^{\circ},90^{\circ}\}$-$\Kl_{\theta}(P)$}
\label{appendix:perimeter}

As in the previous section, let us first consider the case of Figure~\ref{fig:area_perimeter}, left.
We want to compute $\Delta\perim^+_T$ ($\Delta\perim^-_T$ which is the difference of the length of the edge at angle~$\beta$ minus the length of the edge at angle~$\theta_i$ plus (minus resp.) the length of the side at angle~$0$. Thus:
\begin{eqnarray}
\Delta\perim^{\pm}_T & = &  |\overline{uq}| - |\overline{up}| \pm |\overline{pq}|
\ =\  \frac{d}{\sin\beta} - \frac{d}{\sin\theta_i} \pm (d\, \cot \theta_i - d\, \cot \beta)\cr
\phantom{x} & = & d \left( \frac{1}{\sin\beta}
- \frac{1}{\sin\theta_i} \pm \frac{\cos\theta_i}{\sin\theta_i} \mp \frac{\cos\beta}{\sin\beta}
\right)
\ =\  d \left( \frac{1 \mp \cos\beta}{\sin\beta}
- \frac{1 \mp \cos\theta_i}{\sin\theta_i} \right).
\label{eqn:perim_h}
\end{eqnarray}

Next, let us consider a triangle~$T$ with edges at angles  $\theta_i$, $\beta$, and $\alpha$
($0 < \theta_i \le \beta < \theta_{i+1} \le \frac{\pi}{2}$) (see Figure~\ref{fig:area_perimeter}, right). Recall that
$\widehat{u p r} = \alpha - \phi$ and
$\widehat{u q r} = \alpha - \theta_i$ which imply that
\[
|\overline{up}| = \frac{h}{\cos(\frac{\pi}{2} - \alpha + \theta_i)} = \frac{h}{\sin(\alpha - \theta_i)},
\qquad\qquad
|\overline{uq}| = \frac{h}{\cos(\frac{\pi}{2} - \alpha + \beta} = \frac{h}{\sin(\alpha - \beta},
\]
Then, since $|\overline{pq}| = h\, \left(  \cot (\alpha - \beta) - \cot (\alpha - \theta_i) \right)$,
the differential~$\Delta\perim$ in the perimeter is equal to:
\begin{eqnarray}\label{eqn:perim_sl}
\Delta\perim^{\pm}_T & = &  |\overline{uq}| - |\overline{up}| \pm |\overline{pq}|
\ =\  \frac{h}{\sin(\alpha - \beta)} - \frac{h}{\sin(\alpha - \theta_i)} \pm  h\, \left( \cot (\alpha - \beta) - \cot (\alpha - \theta_i) \right)
\cr
\phantom{x} & = & h \left( \frac{1 \pm \cos(\alpha - \beta)}{\sin(\alpha - \beta)}
- \frac{1 \pm \cos(\alpha - \theta_i)}{\sin(\alpha - \theta_i)}
\right)\cr
\phantom{x} & = & h \left( \frac{1 \pm \cos\alpha \cos \beta \pm \sin\alpha \sin \beta}{\sin\alpha \cos\beta - \cos\alpha \sin\beta)}
- \frac{1 \pm \cos(\alpha - \theta_i)}{\sin(\alpha - \theta_i)}
\right).
\end{eqnarray}

\medskip\noindent
\textbf{Expression of the perimeter of the $\{0^\circ\}$-$\Kl_{\theta}(P)$.\ }
In Figure~\ref{Figure1-triangles}, the green (red, resp.) triangles which result in an increase (a decrease resp.) in the perimeter contribute a $\Delta\perim^{+}_T$ ($\Delta\perim^{-}_T$ resp.) term, and thus we use both the differentials $\Delta\perim^{\pm}_T$. So, from Equations \ref{eqn:perim_h} and \ref{eqn:perim_sl}, for the perimeter~$\Pi(\beta)$ of the $\{0^\circ\}$-$\Kl_{\theta}(P)$
as a function of $\beta \in [\theta_i, \theta_{i+1})$ we can write
\[
\Pi(\beta)= \displaystyle \Pi(\theta_i) + \sum_{i=1}^4
\left( \frac{C_i\sin\beta+D_i\cos\beta+H_i}{E_i\sin\beta+F_i\cos\beta}
+ G_i\right),
\]
where $\Pi(\theta_i)$ is the known value of the current perimeter, and $C_i$, $D_i$, $E_i$, $F_i$, $G_i$, and $H_i$ are all constants for every $i=1,\dots,4$.

\smallskip\noindent
\textbf{Orthogonal polygons:} For the case of the $\{0^{\circ}\}$-$\Kl_{\theta}(P)$ for an orthogonal polygons $P$, the triangles~$A_i(\beta)$ have a horizontal or a vertical base. Then from Equation~\ref{eqn:perim_sl} for $\alpha = \frac{\pi}{2}$, we have
\begin{eqnarray}\label{eqn:perim_ortho}
\Delta\perim^{\pm}_T \ =\  h \left( \frac{1 \pm \sin \beta}{\cos\beta}
- \frac{1 \pm \cos(\frac{\pi}{2}-\theta_i)}{\sin(\frac{\pi}{2}-\theta_i)}
\right) \ =\  h \left( \frac{1 \pm \sin \beta}{\cos\beta}
- \frac{1 \pm \sin\theta_i}{\cos\theta_i}
\right)
\end{eqnarray}
and Equations \ref{eqn:perim_h} and \ref{eqn:perim_ortho} imply that in this case  the perimeter~$\perim(\beta)$ of the $\{0^\circ\}$-$\Kl_{\theta}(P)$
in terms of $\beta \in [\theta_i, \theta_{i+1})$ is equal to
\[
\perim(\beta) = \perim(\theta_i)
+ C \,\frac{1 + \cos\beta}{\sin\beta}
+ D \,\frac{1 - \cos\beta}{\sin\beta}
+ E \,\frac{1 + \sin\beta}{\cos\beta}
+ F \,\frac{1 - \sin\beta}{\cos\beta}
+ G
\]
for appropriate constants $C, D, E, F, G$.

\medskip\noindent
\textbf{Expression of the perimeter of the $\{0^\circ,90^\circ\}$-$\Kl_{\theta}(P)$.\ }
In this case, we may also have corners of the kernel moving along a circular arc as shown in Figure~\ref{fig:0_90_kernel_formula}. From this figure, we observe that the differential in the perimeter is
\begin{eqnarray*}
\Delta\perim_M & = &  (|\overline{uq}| + |\overline{vq}|)
- (|\overline{up}| + |\overline{vp}|)\cr
\phantom{x} & = & (|\overline{uv}| \, \cos\beta
+ |\overline{uv}| \, \sin\beta)
- (|\overline{uv}| \, \cos\theta_i
+ |\overline{uv}| \, \sin\theta_i)\cr
\phantom{x} & = & |\overline{uv}| \, (\sin\beta + \cos\beta)
- |\overline{uv}| \, (\sin\theta_i + \cos\theta_i).
\end{eqnarray*}
Thus, for simple polygons, the differential in the perimeter involves at most $4$ terms, each being
\[
\frac{C_i\sin\beta+D_i\cos\beta+H_i}{E_i\sin\beta+F_i\cos\beta}
\quad\quad \hbox{ or } \
K_i \,(\sin\beta + \cos\beta).
\]

\noindent
\textbf{Orthogonal polygons:} For the case of orthogonal polygons, similarly we have at most $4$ terms, each being $C_i \,(1 \pm \cos\beta)/\sin\beta$, $D_i \,(1 \pm \sin\beta)/\cos\beta$, or $K_i \,(\sin\beta + \cos\beta)$ and thus
\[
\perim(\beta) = \perim(\theta_i)
+ C \,\frac{1 + \cos\beta}{\sin\beta}
+ D \,\frac{1 - \cos\beta}{\sin\beta}
+ E \,\frac{1 + \sin\beta}{\cos\beta}
+ F \,\frac{1 - \sin\beta}{\cos\beta}
+ K \,(\sin\beta + \cos\beta) + G
\]
for appropriate constants $C, D, E, F, G, K$.

\medskip
The above expressions of the perimeter~$\Pi(\beta)$ of the $\{0^\circ\}$-$\Kl_{\theta}(P)$ and the $\{0^\circ,90^\circ\}$-$\Kl_{\theta}(P)$ in terms of the angle~$\beta$ can be maximized as we showed for the area of the $\{0^\circ\}$-$\Kl_{\theta}(P)$ in Appendix~A.1 by computing the real solutions of a polynomial of constant degree.

\end{document}